\newif\iflong
\newif\ifshort
\newtheorem{theorem}{Theorem}
\newtheorem{lemma}[theorem]{Lemma}
\newtheorem{corollary}[theorem]{Corollary}
\newtheorem{fact}[theorem]{Fact}
\newtheorem{definition}[theorem]{Definition}
\newtheorem{observation}{Observation}
\newcounter{myalgo}
\newcommand{\myalgo}[1]{\refstepcounter{myalgo}\label{#1}}
\newcommand{\NP}{\ensuremath{\mathtt{NP}}\xspace}
\newcommand{\Wtwo}{\ensuremath{\mathtt{W}[2]}\xspace}
\newcommand{\PP}{\ensuremath{\mathtt{P}}\xspace}
\newcommand{\FPT}{\ensuremath{\mathtt{FPT}}\xspace}
\newcommand{\XP}{\ensuremath{\mathtt{XP}}\xspace}
\newcommand{\APX}{\ensuremath{\mathtt{APX}}\xspace}
\newcommand{\W}[1][1]{\textsf{W[#1]}\xspace}
\newcommand{\Wh}[1][1]{\W[#1]-hard\xspace}
\newcommand{\bigoh}{\mathcal{O}}
\newcommand{\naesat}{\textsc{NAE3SAT}\xspace}
\newcommand{\monlinnaesat}{\textsc{MonLinNAE3SAT}\xspace}
\newcommand{\gcal}{\ensuremath{\mathcal{G}}\xspace}
\newcommand{\ecal}{\ensuremath{\mathcal{E}}\xspace}
\newcommand{\tuple}[1]{\ensuremath{\langle {#1} \rangle}\xspace}
\newcommand{\tmax}{\ensuremath{t_{\max}}\xspace}
\newcommand{\reach}{\ensuremath{\mathtt{reach}}\xspace}
\newcommand{\reachtime}{\ensuremath{\mathtt{reachtime}}\xspace}
\newcommand{\delay}{\ensuremath{\mathtt{shift}}\xspace}
\newcommand{\reachfast}{\textsc{ReachFast}\xspace}
\newcommand{\reachfasttotal}{\textsc{ReachFastTotal}\xspace}
\newcommand{\depth}{\mathtt{d}\xspace}
\newcommand{\tw}{\ensuremath{\operatorname{tw}}}
\newcommand{\tr}{\ensuremath{\operatorname{tr}}}
\newcommand{\hittngset}{\textsc{HittingSet}\xspace}
\def\tsc#1{\csdef{#1}{\textsc{\lowercase{#1}}\xspace}}
\begin{document}
\let\WriteBookmarks\relax
\def\floatpagepagefraction{1}
\def\textpagefraction{.001}
\nolinenumbers
\shorttitle{}    

\shortauthors{}

\title{Minimizing Reachability Times on Temporal Graphs via Shifting Labels}

\tnotetext[t1]{A preliminary version of the results in this paper has appeared in IJCAI 2023~\cite{DBLP:conf/ijcai/DeligkasES23}. Argyrios Deligkas is supported by Engineering and Physical Sciences Research Council (EPSRC) grant EP/X039862/1.}

\author[1]{Argyrios Deligkas}
\ead{argyrios.deligkas@rhul.ac.uk}
\author[1]{Eduard Eiben}
\ead{eduard.eiben@rhul.ac.uk}
\author[2]{George Skretas}
\ead{georgios.skretas@hpi.de}
\address[1]{Royal Holloway, University of London, UK}
\address[2]{Hasso Plattner Institute, University of Potsdam, Germany}

\begin{abstract}
We study how we can accelerate the spreading of information in temporal graphs via shifting operations; a problem that captures real-world applications varying from information flows to distribution schedules.
In a temporal graph there is a set of fixed vertices and the available connections between them change over time in a predefined manner. 
We observe that, in some cases, shifting some connections, i.e., advancing or delaying them, can decrease the travel time from some vertex (source) to another vertex.
We study how we can minimize the maximum time a set of sources needs to reach every vertex, when we are allowed to shift some of the connections. 
If we restrict the allowed number of changes, we prove that, already for a single source, the problem is \NP-hard, and \Wtwo-hard when parameterized by the number of changes.
Then we focus on unconstrained number of changes. We derive a polynomial-time algorithm when there is a single source. 
When there are two sources, we show that the problem becomes \NP-hard; on the other hand, we design an \FPT algorithm parameterized by the treewidth of the graph plus the lifetime of the optimal solution, that works for any number of sources. 
Finally, we provide polynomial-time algorithms for several graph classes.
\end{abstract}

\begin{keywords}
 \sep \sep \sep
\end{keywords}

\maketitle

\section{Introduction}
\label{sec:intro}

Every day million pieces of information need to be disseminated and most often it is desirable to minimize their delivery time. 
In many cases, the diffusion of information depends on schedules of physical and online meetings between entities that form a dynamic network that changes over time, depending on their availability. 
Usually, these schedules are constrained due to physical limitations, laws, available infrastructure, and costs of extra resources and meetings. 
These constraints are usually unavoidable since it is difficult, if not impossible, to bypass them. 
On the other hand, careful changes on the existing scheduling timetable can significantly reduce the time a piece of information needs to reach every recipient. 

For an example motivated by real life, consider the scenario where there are three employees $A$, $B$, and $C$ in a university, where each one of them has some information that needs to reach every other person.
The central timetabling team has already arranged some meetings for them.
Between employees $A$ and $B$ there is one meeting at 9am and one at 11am. 
Between $B$ and $C$ there exists a meeting arranged at 8am and another one at 4pm. 
Then observe that under the current timetable, the information $A$ has will reach $C$ at 4pm -- $B$ will receive the information from $A$ at 9am and they will transmit it to $C$ at 4pm -- while the information $C$ has can reach $A$ at 9am. However, if we {\em delay} the 8am meeting between $B$ and $C$ to 10am, then the information of $A$ can reach $C$ at 10am, and the information of $C$ can still reach $A$ at 11am.
Figure~\ref{fig:example} depicts this example.

\begin{figure}[pos=h]
\includegraphics[scale=0.75]{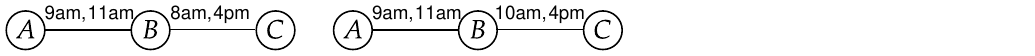}
\hspace{-1.2cm}
\caption{Left: The original timetable. Right: The modified timetable with the delayed meeting time.}
\label{fig:example}
\end{figure}

The inherent temporal nature of timetabling, combined with the existence of the underlying temporal network, allows us to concisely formulate the scheduling scenario described above as an optimization problem over a {\em temporal graph}~\cite{kempe02}. At a high level, a temporal graph consists of a fixed set of vertices and a timetable that defines available connections, or {\em temporal edges}, at any point in time. 
A temporal path 
is a sequence of temporal edges which, additionally to the usual connectivity, respects the time constraints, i.e., temporal edges that are used later in the sequence, are available later in time than the ones that come before~\cite{Whitbeck2012}. As we have seen in the example above, there exist cases where delaying some temporal edges makes some temporal paths ``faster''. In other words, both advancing and delaying some edges can make the existing infrastructure more efficient in terms of reachability times.

\subsection{Our contribution}


The problem studied, which we term \reachfast, can be summarized as follows. Given a temporal graph and a set of sources, {\em shift} some temporal edges, i.e., {\em advance} or {\em delay},  such that the maximum time any source needs to reach all the vertices is {\em minimized}. First, we study \reachfast with a single source in the graph. We distinguish between two cases: either some constraint on advances/delays is imposed, or no such constraints exist. We study two natural constrained versions that can appear in real life scenarios:
\begin{itemize}
    \item $\reachfast(k)$, where $k$ bounds the number of edges shifted;
    \item $\reachfasttotal(k)$, where $k$ bounds the sum of absolute shifts over all edges.
\end{itemize}
In Theorem~\ref{thm:one-source-hard} we prove that both constrained versions are \Wtwo-hard when parameterized by $k$, and \NP-hard otherwise. 
We complement these negative results with \iflong Algorithm~\ref{alg:one-source} \fi \ifshort an algorithm \fi that solves (unconstrained) \reachfast in polynomial time. Apparently, the problem becomes tractable the moment we remove the constraints on shifts. 

The next logical step, is to consider \reachfast for multiple sources, since networks may have multiple key vertices that need to reach the whole network. In Theorem~\ref{thm:two-sources-hard}, we prove that \reachfast becomes \NP-hard even if there are just two sources and the temporal graph is rather restricted: the graph has lifetime 1 and (static) diameter 6. This indicates that tractability requires some constraints on the input underlying graph. We derive efficient algorithms for \reachfast for several graph classes of the underlying network. 
For trees, we show that \reachfast can be solved in polynomial time for any number of sources. Our next result considers graphs of bounded treewidth.
We show that in this case, \reachfast can be encoded in Monadic Second Order logic, where the size of the formula depends only on the deadline by which we want to reach all the vertices from every source. Then, using Courcelle's Theorem, we get a fixed parameter tractable algorithm for the problem, parameterized by the treewidth of the graph and the lifetime of the optimal solution that works for any number of sources.
Finally, we consider a graph formed by several parallel paths with two common endpoints. For this class of graphs, we design a polynomial-time algorithm for \reachfast when there are only two sources and they are the endpoints of the paths.
Then, using Courcelle's Theorem, we get a fixed parameter tractable algorithm for the problem, parameterized by the treewidth of the graph and the lifetime of the optimal solution that works for any number of sources.

\paragraph{\textbf{Related work.}} The redesign of temporal graphs such that an objective is optimized, has received significant attention recently, mainly motivated by virus-spread minimization~\cite{braunstein2016,ENRIGHT201888}. A line of work studies minimization of reachability sets for a set of sources over a temporal graph using a variety of operations: delaying operations were first studied in~\cite{DP20}, alongside merging operations; edge-deletion and temporal edge-deletions operations were studied in~\cite{Deleting_Edges,COA2015}; reordering of the temporal edges was studied in~\cite{reordering}. Finally,~\cite{MRZ21} studies the relationship between the delaying and the edge-deletion operations under the reachability-minimization objective.
All of the above mentioned reachability problems heavily depend on the computation of temporal paths, a problem that has received a lot of attention~\cite{WC+14,WL+15,WH+16,shortest-experimental} and different definitions of temporal paths have been proposed~\cite{casteigts2021finding,thejaswi2020restless}. 
In ~\cite{li2018go} a similar idea to ours was explored under a slightly different objective. There, every temporal edge had a traversal-time that depends on the time it is used and  the goal was to find temporal paths that minimize the overall traversal-time. The authors observed that, for their objective, it might be beneficial to wait until they use an edge, since this might decrease the total traversal-time. 
Another recent work of interest \cite{KMMS22} studies the problem of finding the minimum number of labels that have to be added so that a temporal graph $G$ is temporally connected. This problem seems really close to ours, albeit they have some crucial differences. Namely, temporal connectivity requires ``both-ways'' connections, while we focus on ``one-way'' connections.

\ifshort
\smallskip

\noindent {\emph{Statements where proofs or details are omitted due to
space constraints are marked with $\star$. A version
containing all proofs and details is provided in the appendix.}}
\fi


\section{Preliminaries}
\label{sec:prelims}

For $n\in \mathbb{N}$, we denote $[n]:= \{1,2, \ldots, n\}$.
A {\em temporal graph} $\gcal := \tuple{G, \ecal}$ is defined by an {\em underlying graph} $G=(V,E)$ and a sequence of edge-sets $\ecal = (E_1, E_2, \ldots, E_{\tmax})$. It holds that $E = E_1\cup E_2\cup \dots \cup E_{t_{max}}$, $E_{t_{max}}\neq \emptyset$ and the {\em lifetime} of $\gcal$ is \tmax.
An edge $e \in E$ has label $i$, if it is available at time step $i$, i.e., $e \in E_i$. \iflong In addition, an edge has $k$ {\em labels}, if it appears in $k$ edge-sets. \fi The traversal time of an edge $e\in E_i$ at time step $i$ is $\tr(i,e)$; i.e., starting from one endpoint of the edge at time step $i$, we reach the other endpoint at the time step $i+\tr(i,e)$. 

A {\em temporal path} in $\tuple{G, \ecal}$ from vertex $v_1$ to vertex $v_m$ is a sequence of edges $P = (v_iv_{i+1}, t_i)_{i=1}^{m-1}$ such that for every $i \in [m]$ it holds that:
\begin{itemize}
    \item $v_iv_{i+1} \in E_{t_i}$, i.e. $v_iv_{i+1}$ is available at time step $t_i$;
    \item $t_i+\tr(t_i,v_iv_{i+1}) \leq t_{i+1}$ for every $i \in [m-1]$. 
\end{itemize}
The {\em arrival} time of $P$ is $t_m = t_{m-1}+\tr(t_{m-1},v_{m-1}v_{m})$. We call $P$ a foremost temporal $(v_1,v_m)$-path if 
for every $j\in [m-1]$ there is no temporal path $P'_j$ from $v_1$ to $v_{j+1}$ such that the arrival time of $P'_j$ is strictly smaller than the arrival time of $P_j = (v_iv_{i+1}, t_i)_{i=1}^{j}$.

A vertex $v$ is {\em reachable} from vertex $u$ by time $t$ in a temporal graph $\gcal$ if there exists a temporal path from $u$ to $v$ with arrival time at most $t$. We assume that a vertex is reachable from itself by time zero. It is possible that $u$ is reachable from $v$, but $v$ is not reachable from $u$. The reachability set of $v$, denoted $\reach(v, \gcal)$, contains all the vertices reachable from $v$. {\em The reaching-time} of $v$, denoted $\reachtime(v,\gcal)$, is $t$ if there exists a temporal path from $v$ to every vertex $u \in V$ with arrival time at most $t$. If there is a vertex $u \notin \reach(v,\gcal)$, then we will say that $v$ has infinite reaching-time.


The {\em shifting} of edge $uv \in E_i$ by $\delta \in \mathbb{Z}\setminus\{0\}$, denoted $\delay(uv,i) = \delta$, refers to replacing the label $i$ of this edge with the label $i+\delta$. When $\delta$ has a positive value, we say that the label is {\em delayed}, and when $\delta$ has a negative value, we say that the label is {\em advanced}. Note that shifting an edge is allowed only when $i+\delta\geq1$, but we allow $i+\delta> t_{max}$, i.e., we are allowed to increase the lifetime of the graph by delaying edges. Given a temporal graph $\gcal$ and a set of edges $X \subseteq \ecal$, we denote $\tilde{\gcal}(X)$ a class of temporal graphs where every edge in $X$ is shifted. A temporal graph $\gcal' \in \tilde{\gcal}(X)$ has $k$ edges shifted, if $|X|=k$; the total shift of $\gcal'$ is $k$ if $\sum_{i \in [\tmax]}\sum_{uv \in E_i} |\delay(uv,i)| = k$. Observe that shifting an edge can increase the reachability set of a vertex $v$, or it can decrease the reaching-time between two vertices. See Figure \ref{fig:prelims-example} for an example.


\begin{figure}[pos=h]
\begin{center}
\includegraphics[page=1]{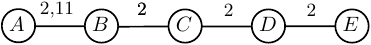}
\hspace{4cm}
\includegraphics[page=2]{prelims-example2_new.pdf}
\end{center}
\caption{Top: The original temporal graph of a network where we assume that every label has traversal time equal to $1$. Note that vertex $C$ reaches vertex $A$ at time step $12$ and vertex $E$ is unreachable. Bottom: The modified temporal graph by delaying edge $AB$ and advancing edge $CD$. Note that now, vertex $C$ reaches vertex $A$ at time step $4$ and vertex $E$ can now be reached at time step $3$.}
\label{fig:prelims-example}
\end{figure}

\noindent\textbf{The \reachfast problem.}
An instance of \reachfast consists of a temporal graph $\gcal := \tuple{G, \ecal}$ and a set of sources $S \subseteq V$.
The goal is to minimize the reaching-time any source requires to reach all the vertices of $V$ under shifting operations. We say that a temporal graph \(\gcal'\) is a \emph{solution} to an instance \(\tuple{\gcal= \tuple{G, \ecal}, S}\) of  \reachfast if \(\gcal'\in \tilde{\gcal}(X)\) for some \(X\subseteq E(G)\) and for every \(v\in S\) we have \(\reach(v,\gcal') = V(G)\). The \emph{value} of a solution \(\gcal'\) is \(\max_{v\in S}~ \reachtime(v, \gcal')\). We say that a solution \(\gcal'\) is \emph{optimal} if the value of \(\gcal'\) is minimized among all solutions. In other words, our objective is
$$
    \min_{X\subseteq V, \gcal'\in \tilde{\gcal}(X)} \max_{v\in S}~ \reachtime(v, \gcal').
$$

\noindent
In addition, we study two {\em constrained} versions of the problem.
In $\reachfast(k)$ we require that $|X| \leq k$, where $k$ is a part of the input; in $\reachfasttotal(k)$ we require that $\sum_{i \in [\tmax]}\sum_{uv \in E_i} |\delay(uv,i)| \leq k$. 




The next observation shows that when we can shift any number of edges, \reachfast under shifting reduces to \reachfast under delaying operations {\em only}.
\begin{observation} \label{obs:delaying}
Let  $\gcal := \tuple{G, \ecal}$ be a temporal graph, $S\subseteq V(G)$ a set of sources and let $\gcal' := \tuple{G, \ecal'}$ be the temporal graph constructed from $\gcal$ 
by advancing the labels of each edge $e$ to the first $k_e$ time steps, where $k_e$ is the number of labels of $e$. 
Then, any solution for \reachfast under $S$ in graph $\gcal'$ that only delays edges, is also a solution for \reachfast under $S$ in $\gcal$.
\end{observation}

\paragraph{\bf Parameterized complexity.}
We refer to the standard books for a basic overview of parameterized complexity theory~\cite{CyganFKLMPPS15,DowneyFellows13}.
At a high level, parameterized complexity studies the complexity of a problem with respect to its input size, $n$,  and the size of a parameter $k$. A problem is {\em fixed parameter tractable} by $k$, if it can be solved in time $f(k)\cdot \text{poly}(n)$, where $f$ is a computable function. 
A less favorable, but still positive, outcome is an $\XP{}$ \emph{algorithm}, which has running-time $\bigoh(n^{f(k)})$; problems admitting such algorithms belong to the class $\XP$. Showing that a problem is \Wh[t] rules out the existence of a fixed-parameter algorithm under the well-established assumption that \Wh[t]$\neq \FPT$.

\section{Single source}
\label{sec:one-source}
In this section we study the most basic case of the \reachfast problem, where there is only a single source.  To  begin with, we show that both constrained versions of the problem are \Wtwo-hard when parameterized by $k$. 
\iflong 
\begin{theorem}
\fi 
\ifshort 
\begin{theorem}[$\star$]
\fi 
\label{thm:one-source-hard}
$\reachfasttotal(k)$~and $ \reachfast(k)$ are \Wtwo-hard when~parameterized~by~$k$.
\end{theorem}
\ifshort
\begin{proof}[Proof sketch]
\fi
\iflong
\begin{proof}
\fi
We will prove the theorem via a reduction from \hittngset. 
An instance of \hittngset consists of a collection of subsets $S_1, S_2, \ldots, S_m$ over $[n]$ and a positive integer $k$, and we need to decide if there is a set $T \subset [n]$ of size $k$, such that $S_j \cap T \neq \emptyset$ for every $j \in [m]$. \hittngset is known to be \Wtwo-hard when parameterized by $k$ (see, e.g., \cite{CyganFKLMPPS15}).

Given an instance of \hittngset with \(m\) sets over \([n]\),we construct a temporal graph \gcal with $2n+m+1$ vertices as follows. 
First we let $u_0$ be the source. Then, we construct the vertices $x_1, x_2, \ldots, x_n$, and the vertices $y_1, y_2, \ldots, y_n$. For every $i \in [n]$ we have three edges: $u_0x_i$ with label 1; $x_iy_i$ with label 1; $u_0y_i$ with label 3. Finally, we construct the vertices $z_1, z_2, \ldots, z_m$, where vertex $z_j$ corresponds to subset $S_j$. If $i \in S_j$, then we create the edge $y_iz_j$ with label 3; Figure \ref{fig:one-source-constr} depicts our construction.
Observe that in \gcal, vertex $u_0$ reaches all $x$-vertices at time step 1 and all $y$-vertices at time step 3, but it does not reach any $z$-vertex. 

\begin{figure}[pos=h]

\includegraphics[scale=0.75]{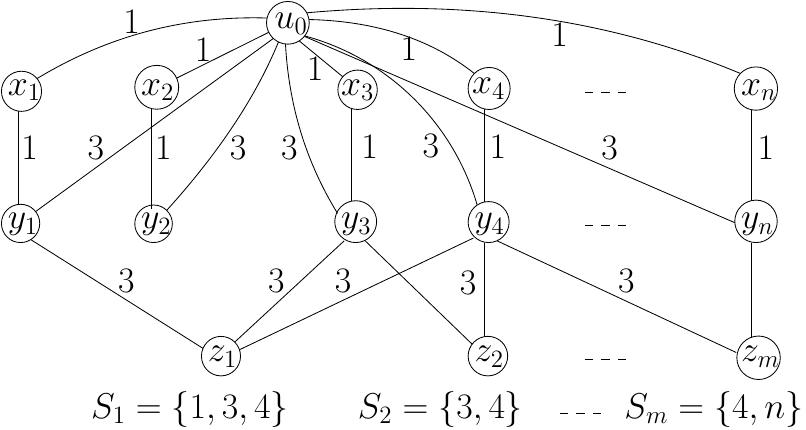}
\hspace{3.1cm}
\caption{Illustration of the construction used in Theorem~\ref{thm:one-source-hard}. 
Here, the subsets of the \hittngset instance are $S_1=\{1,3,4\},S_2,=\{3,4\},\ldots,S_m=\{4,n\}$.
}
\label{fig:one-source-constr}
\end{figure}

We claim that there exists a solution to $\reachfast(k)$ and $\reachfasttotal(k)$, such that $u_0$ reaches all the vertices at time step 3 that in the temporal graph we constructed above, if and only if there is a solution to the original \hittngset instance with $|T|=k$.
\iflong

Firstly, assume that we have a solution $T$ for the \hittngset instance. We construct $\tilde{\gcal}(X)$ as follows. For every $i \in T$, we add $x_iy_i \in X$ and in addition we delay it for one time step, i.e. we change the label of $x_iy_i$ from 1 to 2. Then observe the following. Vertex $u_0$ reaches all the $x$-vertices at time step 1. At time step 2 $u_0$ reaches every vertex $y_i$, where $x_iy_i \in X$, via the $x$-vertices; this is possible since all the edges in $X$ are available at time step 2. Finally, at time step 3 $u_0$ directly reaches the remaining $y$-vertices via the edges $u_0y_i$, where $i \notin T$, which are available at time step 3. In addition, for every $j\in [m]$, vertex $z_j$ is reached by $u_0$ via the edge $y_iz_j$ where $i \in T$. Observe that since $S_j \cap T \neq \emptyset$ for every $j \in [m]$, we get that by construction every $z$-vertex is connected to at least one $y$-vertex that is reachable from $u_0$ by time step 2.

For the other direction, assume that there is a solution to $\reachfast(k)$ (or $\reachfasttotal$ $(k)$) where $u_0$ reaches every vertex by time step 3 where the edges in $X$ are delayed. Then, observe that no edge with label 3 is delayed; any delay on these edges would not affect the solution. Similarly, it is not possible to advance any edge with label 1. In addition, clearly there is no reason to delay any edge between $u_0$ and a $x$-vertex. Hence, we conclude that the assumed solution of $\reachfast(k)$ delays only edges between $x$-vertices and $y$-vertices, or advances edges between $u_0$ and $y$-vertices. Furthermore, every edge in $X$ is delayed by one time step or advanced by one time step. This means that the total delay is $k$, hence the claimed solution for $\reachfast(k)$ is a solution for $\reachfasttotal(k)$ as well. We construct a solution for \hittngset as follows: for every $i \in [n]$, if edge $x_iy_i$ is delayed or if edge $u_0y_i$, then $i \in T$. Observe that $|T|=k$.
To see why $T$ is a solution for the \hittngset instance, observe that since $u_0$ reaches all the vertices by time step 3, we get that for every $j \in [m]$, vertex $z_j$ has to be adjacent to a vertex $y_i$ where $x_iy_i \in X$, i.e. $x_iy_i$ is delayed or $u_0y_i$ is advanced. Hence, by construction we get that for every $j \in [m]$ we get that $S_j \cap T \neq \emptyset$.
\fi
\end{proof}
\noindent
The proof of Theorem~\ref{thm:one-source-hard} immediately implies the following corollary. 
\begin{corollary}
\label{cor:one-source-np-h}
$\reachfast(k)$ and $\reachfasttotal(k)$ are \NP-hard when $k$ is part of the input.
\end{corollary}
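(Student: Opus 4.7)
The plan is to perform a parameterized reduction from \hittngset, which is \Wtwo-hard when parameterized by the solution size, to both $\reachfast(k)$ and $\reachfasttotal(k)$ with the same parameter. The construction described in the excerpt already does the heavy lifting: given an instance with sets $S_1, \ldots, S_m$ over $[n]$, I would build the temporal graph on vertices $\{v_0\} \cup \{x_i\}_{i\in[n]}\cup\{y_i\}_{i\in[n]}\cup\{z_j\}_{j\in[m]}$ with the three ``layers'' of edges described, and then argue that a hitting set of size $k$ corresponds exactly to a delay solution of size $k$ that achieves reaching-time $3$ from $v_0$.

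First, I would analyze the reachability of the unmodified graph: from $v_0$, every $x_i$ is reached at time $1$, every $y_i$ is reached at time $3$ (only via the direct edge $v_0 y_i$, since $v_0 x_i$ and $x_i y_i$ share label $1$), and no $z_j$ is reached at all, because its only incident edges have label $3$ and its only neighbours are $y$-vertices that themselves become available only at time $3$. This establishes that improving reaching-time to $3$ while reaching every $z_j$ requires delaying some edges.

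Next, for the forward direction, given a hitting set $T$ of size $k$, I would delay every edge $x_i y_i$ with $i\in T$ by exactly one step, changing its label from $1$ to $2$. Then $v_0$ reaches each $x_i$ at time $1$, reaches each $y_i$ with $i\in T$ at time $2$, reaches each remaining $y_i$ at time $3$ via $v_0 y_i$, and finally reaches each $z_j$ at time $3$ by picking any $i\in T\cap S_j$ (nonempty because $T$ is a hitting set) and traversing $y_i z_j$. This uses $|T|=k$ delayed edges and total delay $k$, so it is a valid solution to both constrained variants.

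For the reverse direction, the main obstacle is verifying that any feasible delay scheme of size at most $k$ reaching everything by time $3$ must essentially come from a hitting set. The key observation is that delaying an edge with label $3$ can only make it unusable at time $3$, so such delays are pointless; delaying $v_0 x_i$ or $v_0 y_i$ also cannot help reach any $z_j$ by time $3$; and delaying $x_i y_i$ by more than one step is wasteful, since it needs to be usable strictly before the label-$3$ edge $y_i z_j$. Hence an optimal (or even merely feasible) solution delays only edges $x_i y_i$, each by exactly one step, so $|X|$ and the total delay coincide and are both at most $k$. Setting $T=\{i:x_i y_i \in X\}$, every $z_j$ is reached at time $3$ only via some $y_i z_j$ where $i\in T\cap S_j$, so $T$ hits every $S_j$ and $|T|\le k$. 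The parameter $k$ is preserved exactly, so the reduction is a parameterized reduction, yielding \Wtwo-hardness for both problems and, as an immediate consequence, \NP-hardness when $k$ is part of the input.
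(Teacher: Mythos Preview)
Your proposal is correct and follows essentially the same route as the paper: the corollary is stated there as an immediate consequence of Theorem~\ref{thm:one-source-hard}, whose proof is exactly the \hittngset reduction you describe (same gadget, same forward/backward arguments, same observation that only the $x_iy_i$ edges need be delayed and each by exactly one step). The only difference is presentational---you fold the proof of the theorem into the corollary rather than invoking it.
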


The results above indicate that if we want to prove any positive result, then we need to either restrict the underlying graph structure, or study the unconstrained version of \reachfast. 
On the other hand, we can design a polynomial-time algorithm, termed Algorithm \ref{alg:one-source}, for \reachfast, when there is a single source.
We will utilize Observation~\ref{obs:delaying} and \ifshort wlog \fi \iflong without loss of generality \fi we will assume that all the labels appear at the first time step; thus we have to consider only delaying operations.
\iflong
Recall, due to Observation \ref{obs:delaying}, any solution given by our algorithms for a graph $G$ would also be a solution for every graph $G'$ that has the same vertex and edge set, and each edge has the same number of labels but with arbitrary values.
\fi

Algorithm \ref{alg:one-source} uses a greedy approach. In every round $r$, it checks every vertex $u$ that has been reached by the source by time $t_r$, and delays every incident unused temporal edge $uw$ to the time step with the minimum arrival time to $w$, if vertex $w$ has not yet been reached by source vertex $v$.


\iflong
\begin{theorem}
\fi
\ifshort
\begin{theorem}[$\star$]
\fi
\label{lem:oneSource}
Algorithm \iflong \ref{alg:one-source} \fi \ifshort  1 \fi  solves the unconstrained version of \reachfast with a single source in polynomial time.
\end{theorem}
\iflong
\begin{proof}
    We are going to prove that the algorithm solves the \reachfast problem by showing that the algorithm also minimizes the individual time it takes for the source vertex $v$ to reach any vertex $u\in V$.
    
    Consider any arbitrary vertex $u\in V$ after the execution of Algorithm \ref{alg:one-source}. Assume that the source vertex $v$ reaches vertex $u$ at time $t_i$ via path $P_1=(v,w_1,w_2,\ldots w_k,u)$ with a combination of edge delays, called $X$. Now assume that there is some other combination of edge delays, called $X'$ so that the source vertex $v$ reaches vertex $u$ at time $t_j<t_i$. This is possible either (i) via path $P_1$ with reaching-time $t_j$, or (ii) via some other path $P_2$ with reaching-time $t_j$. 
    
    Assume that reaching-time $t_j$ is achieved via path $P_1$. Note that if the optimal reaching-time for vertex $u$ is achieved through some combination of edge delays via path $P_1=(v,w_1,w_2,\ldots w_k,u)$, then the optimal reaching-time of every vertex $w_i\in P_1$ is achieved through that same path $P_1$ with the same combination of edge delays. Otherwise, if there was some other path $P_3$ that achieved a better reaching-time for some vertex $w_i\in P_1$, then the optimal reaching-time for vertex $u$ would be achieved via path $P_4=(P_3,w_i,w_{i+1},...u)$ which is impossible by assumption. 

    \smallskip
    \noindent\textbf{Case (i):} Consider now the optimal reaching-time of vertex $w_1$ via path $P_1$ and $X'$. Note that the algorithm delays the first available label of edge $uw_1$ to the time step which yields the minimum arrival time to $w_1$. Therefore $X'$ must contain the same edge delay for $uw_1$ as $X$. Consider now the optimal reaching-time of vertex $w_2$. Again, the single source algorithm delays the first available label of edge $w_1w_2$ to the time step such that yields the minimum arrival time to $w_2$ after $w_1$ has been reached. And since the reaching-time of $w_1$ is the same via $X$ and $X'$, the same is true for $w_2$. The argument works for every $w_i\in P_1$, including vertex $u$. Therefore, there is no other combination of edge delays that achieves a better reaching-time of vertex $u$ via path $P_1$.

    \smallskip    
    \noindent\textbf{Case (ii):} Consider now the optimal reaching-time of vertex $v$ via path $P_2=(v,w'_1,w'_2,\ldots w'_k,u)$ and $X''$ that the algorithm did not use. A similar argument works from the previous case that shows that if path $P_2$ had optimal reaching-time with $X''$, the single source algorithm would have used the edge delays of $X''$ and $X=X''$ for the edges on path $P_2$.
\end{proof}
\fi

\begin{algorithm}[t]
\caption{Single Source Reach Fast Algorithm}
\myalgo{myalgo:One-Source-Reach-Fast}
\vspace{0.2cm}
\label{alg:one-source}
\begin{algorithmic}[1]
\REQUIRE{A temporal graph $\gcal := \tuple{G, \ecal}$ and a source $v$.}
\ENSURE{An optimal solution $\gcal'$.}

\STATE{$\mathcal{S} \leftarrow v$; $\mathcal{S}_{t} \leftarrow \emptyset$; $r=0$;}

\WHILE{$|\mathcal{S}|\neq|V|$}
    \FOR{every vertex $u\in \mathcal{S}$}
        \FOR{every neighbor $w\notin \mathcal{S}$ of $u$}
            \FOR{$i=1,2,\ldots, r$}
                \IF{$uw\in E_i$}
                    \STATE{$min=i+\tr(i,uw)$}
                    \FOR{$j==i,j++,min<j$}
                        \IF{$min<\tr(j,wu)$}
                            \STATE{$min=\tr(j,wu)$}
                        \ENDIF    
                    \ENDFOR 
                    \STATE{$\delay(uw,i) = j-i$}
                    \STATE{$\mathcal{S}_{t} \leftarrow \mathcal{S}_{t}\cup \{w\} $} 
                \ENDIF
            \ENDFOR
        \ENDFOR    
    \ENDFOR
    \STATE{$\mathcal{S} \leftarrow \mathcal{S}\cup \mathcal{S}_{t}$; $\mathcal{S}_{t} \leftarrow \emptyset$; $r++$;}
\ENDWHILE

\end{algorithmic}
\end{algorithm}

\section{Two sources}
\label{sec:two-source}
In this section we study the case where we have multiple sources. Since  we know from Theorem~\ref{thm:one-source-hard} that the constrained versions of \reachfast are \NP-hard even when there is a single source, we will only consider the unconstrained case. As we will show below, the unconstrained version of \reachfast becomes hard even if there are just two sources. In fact, the underlying graph that results from our reduction always admits an optimal solution of a constant value. Note that this also means that the problem is even \APX-hard; i.e., unless $\PP = \NP$, there is no polynomial-time \(c\)-approximation for some \(c>1\). We highlight that the constructed temporal graph has lifetime 1 and every label has traversal time 1 as well. This means that, essentially, there are no temporal constraints in the edges.

\iflong
\begin{theorem}
\fi
\ifshort
\begin{theorem}[$\star$]
\fi
\label{thm:two-sources-hard}
It is \NP-complete to decide whether \reachfast with two sources admits a solution of value 6, even on temporal graphs of lifetime 1.
\end{theorem}

\iflong
The \reachfast problem belongs to \NP since given a solution of value $p$, we can check whether there is a temporal path from each source to every vertex with arrival time smaller or equal to $p$.
\fi
We will prove the theorem via a reduction from a special version of \naesat, termed \monlinnaesat~\cite{DD20}. \iflong An instance of \monlinnaesat consists of a \fi \ifshort We are given a  \fi  CNF formula $\phi$ that involves $n$ variables \iflong $x_1, x_2, \ldots, x_n$ \fi and $m$ clauses \iflong $c_1, c_2, \ldots, c_m$ \fi that satisfy the following constraints:
\iflong
\begin{itemize}
    \item every clause is made up of exactly three distinct literals;
    \item each variable appears exactly four times;
    \item there are no negations in the formula, i.e. it is monotone;
    \item the formula is linear; i.e. each pair of clauses shares at most one variable.
\end{itemize}
\fi
\ifshort
every clause is made up of exactly three distinct literals;
each variable appears exactly four times;
there are no negations in the formula;
each pair of clauses shares at most one variable.
\fi
The goal is to decide whether there exists a truth assignment \iflong for the variables\fi, such that for each clause at least one literal is set to true and at least one literal is set to false.

\paragraph{\bf Construction.} 
Given an instance $\phi$ of \monlinnaesat, we will create a temporal graph \gcal with $3n+4m+12$ variables. Since the constructed graph has lifetime 1, we do not have to specify the labels of the edges. The two sources are the vertices $s$ and $s'$. For every clause $c_j$ we create two clause-vertices $a_j, a'_j$, two auxiliary vertices $z_j, z'_j$, and the edges $sa_j, s'a'_j$ and $a_jz_j, a'_jz'_j$. For every variable $x_i$, we create the vertices $b_i, b'_i$ and $w_i$. If variable $x_i$ appears in clause $c_j$, then we add the edges $a_ib_j$ and $a'_ib'_j$. In addition if variables $x_i$ and $x_j$ belong to the same clause we add the edges $b_ib_j, b'_ib'_j$ and $w_iw_j$. Finally, we create a ``ladder'' between $s$ and $s'$ using the vertices $p_i, q_i$ where $i \in [5]$. The ``ladder'' is created by the edges $sp_1, sq_1, s'p_5, s'q_5$ and the edges $p_iq_i$ for every $i \in [5]$. Figure~\ref{fig:two-sources-constr} depicts our construction. Theorem~\ref{thm:two-sources-hard} follows immediately from the following two lemmas.


\begin{figure}[pos=h]
\includegraphics[scale=0.7]{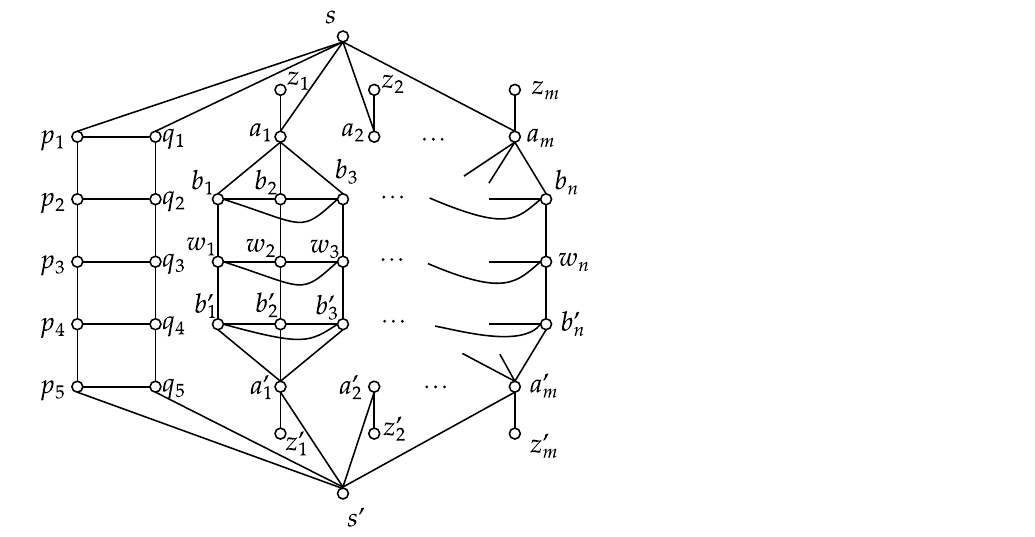}
\caption{The construction used in Theorem~\ref{thm:two-sources-hard}. The first clause of the \monlinnaesat instance is $c_1 =(x_1 \vee x_2 \vee x_3)$.}
\label{fig:two-sources-constr}
\end{figure}

\iflong
\begin{lemma}
\fi
\ifshort
\begin{lemma}[$\star$]
\fi
\label{lem:two-sound}
If the instance of \monlinnaesat is satisfiable, then for the constructed temporal graph \gcal we have that $\min_{X\subseteq V, \gcal'\in \tilde{\gcal}(X)} \max_{v\in S}~ \reachtime(v, \gcal') = 6$.
\end{lemma}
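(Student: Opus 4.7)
My plan is to exhibit, from a satisfying NAE assignment $\sigma$, an explicit set $X$ of delayed edges together with their new labels, and then verify that under the resulting temporal graph $\gcal'\in \tilde{\gcal}(X)$ both sources $s$ and $s'$ reach every vertex within time $6$. Since \gcal has lifetime $1$, every edge is originally labelled $1$; a temporal path of length $\ell$ from a source to some vertex requires delaying $\ell-1$ of its edges to strictly increasing labels $2,3,\ldots,\ell$. The target value $6$ therefore corresponds to routing every vertex to within distance $6$ of each source in the underlying graph, consistently with a single coherent edge-labelling.

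The construction of $X$ follows $\sigma$ locally. For each variable $x_i$, the truth value $\sigma(x_i)$ dictates which of the two symmetric halves of the gadget --- the unprimed side naturally reached from $s$ versus the primed side naturally reached from $s'$ --- receives the ``fast'' label and which receives the ``slow'' label; for instance, if $x_i$ is true then each clause-to-variable edge $a_jb_i$ is delayed to label $2$ and $a'_jb'_i$ to label $3$, and if $x_i$ is false the roles are swapped. The NAE property enters exactly here: within every clause $c_j$, at least one literal is true and at least one is false, so $s$ finds from $a_j$ at least one ``fast'' edge into the variable layer and $s'$ finds a symmetric fast route on the primed side. The remaining edges --- $b_ib_k$, $b'_ib'_k$, $w_iw_k$, the edges incident to the $z$-vertices, and the rungs of the $p_i,q_i$-ladder --- receive schematic delays so that $s$ covers the entire unprimed clause gadget within $3$ steps, crosses to the primed side via the ladder within $6$ steps, and symmetrically for $s'$.

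The main obstacle I anticipate is global consistency: each edge is assigned a unique new label and must simultaneously support every temporal path it participates in, from both sources to multiple targets. Here the \emph{linearity} of the \monlinnaesat instance is crucial --- since two clauses share at most one variable, no clause-to-variable edge is pulled towards conflicting labels by different clauses, and the label of $a_jb_i$ can be defined cleanly as a function of $\sigma(x_i)$ alone. Once consistency is established, the remainder is a routine case analysis over the vertex classes $\{a_j, a'_j, z_j, z'_j, b_i, b'_i, w_i, p_i, q_i, s, s'\}$, exhibiting for each a temporal path of length at most $6$ from both $s$ and $s'$; the ladder delays are engineered precisely so that the source-to-source path realises the tight bound of $6$.
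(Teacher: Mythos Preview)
Your high-level plan --- fix a satisfying NAE assignment $\sigma$, derive an explicit labelling, verify reachability --- is exactly what the paper does, but your concrete picture of how the two sources cross sides is wrong, and this breaks the argument.

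You write that ``$s$ covers the entire unprimed clause gadget within $3$ steps, crosses to the primed side via the ladder within $6$ steps.'' The ladder between $s$ and $s'$ has length $6$, so along it $s$ reaches $s'$ only at time~$6$ and cannot proceed to any $a'_j$, $b'_i$, or $z'_j$ thereafter. The ladder exists solely so that each source reaches the other source and the ladder vertices themselves; it is useless for reaching the primed clause/variable layer from $s$. The actual crossing mechanism is the chain $b_i - w_i - b'_i$ attached to each variable. In the paper's labelling, if $x_i=\mathtt{True}$ the edges $a_jb_i,\,b_iw_i,\,w_ib'_i,\,b'_ia'_j$ receive labels $2,3,4,5$ respectively, so $s$ walks $s\to a_j\to b_i\to w_i\to b'_i\to a'_j\to z'_j$ in exactly six steps; if $x_i=\mathtt{False}$ the labels are reversed to $5,4,3,2$ and $s'$ walks the mirror path to $z_j$. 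The NAE property is used precisely here: every clause $c_j$ has a true literal (giving $s$ a length-$6$ path to $z'_j$) \emph{and} a false literal (giving $s'$ a length-$6$ path to $z_j$). Your suggested labels ($a_jb_i\mapsto 2$, $a'_jb'_i\mapsto 3$ when $x_i$ is true) do not set up any such crossing, and you never assign labels to $b_iw_i$ or $w_ib'_i$ at all.

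A minor point: linearity is not needed in this direction. The label of every edge $a_jb_i$ depends only on $\sigma(x_i)$, so no conflict can arise even if two clauses shared two variables; linearity matters (if at all) for the converse lemma.
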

\iflong
\begin{proof}
\fi
\ifshort
\begin{proof}[Proof sketch]
\fi
\iflong Assume that we have a truth assignment for the variables $x_1, x_2, \ldots, x_n$ that satisfies the instance of \monlinnaesat.\fi~ We create the \iflong following \fi labelling for \gcal whose pattern is depicted in Figure~\ref{fig:two-sources-sound}. \iflong 
\begin{itemize}
    \item For every $i \in [m]$, every edge has $\tr(i,e)=1$.
    \item For every $i \in [m]$ the edges $sa_i$ and $s'a'_i$ have label 1.
    \item If variable $x_i = \mathtt{True}$ and it appears at clause $c_j$, then: edge $a_jb_i$ has label 2; edge $b_iw_i$ has label 3; edge $w_ib'_i$ has label 4; edge $b'_ia'_i$ has label 5.
    \item If variable $x_i = \mathtt{False}$ and it appears at clause $c_j$, then: edge $a_jb_i$ has label 5; edge $b_iw_i$ has label 4; edge $w_ib'_i$ has label 3; edge $b'_ia'_i$ has label 2.
    \item Every edge between any two $b$-vertices, any two $b'$-vertices, or any two $w$-vertices has label 5.
    \item For every $i \in [m]$ the edges $a_iz_i$ and $a'_iz'_i$ have label 6.
    \item Edge $sp_1$ has label 1; edge $p_1p_2$ has label 2; edge $p_2p_3$ has label 3; edge $p_3p_4$ has label 4; edge $p_4p_5$ has label 5; edge $p_5s'$ has label 6.
    \item  For the ``ladder'' he have the following: edge $s'q_5$ has label 1; edge $q_5q_4$ has label 2; edge $q_4q_3$ has label 3; edge $q_3q_2$ has label 4; edge $q_2q_1$ has label 5; edge $q_1s$ has label 6. Finally, for every $i \in [5]$, the edge $p_iq_i$ has label 6.
\end{itemize}
\fi

\begin{figure}[pos=h]

\includegraphics[scale=0.7]{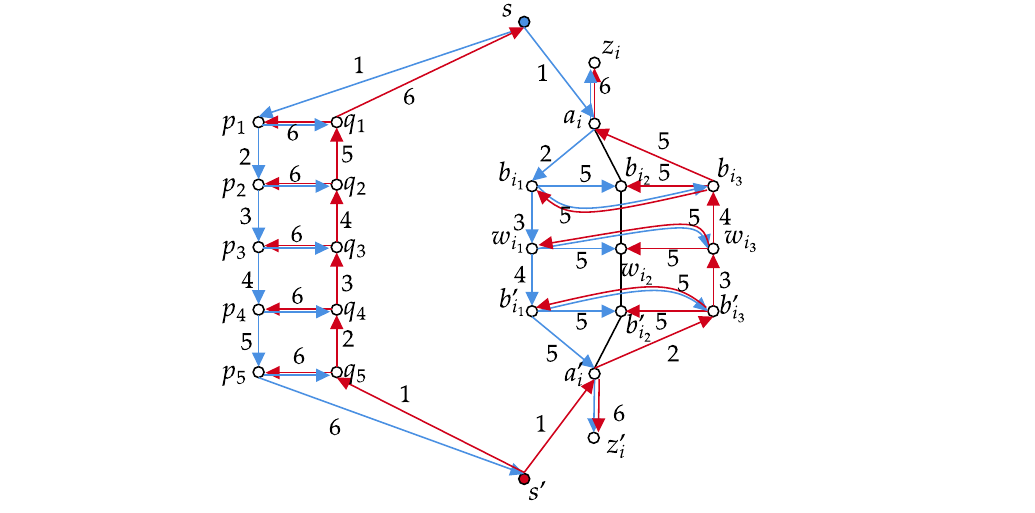}
\hspace{2.0cm}
\caption{The labelling of $\tilde{\gcal}(X)$ for a given satisfying assignment of \monlinnaesat, where in clause $c_j = (x_{i_1} \vee x_{i_2} \vee x_{i_3})$ we have $x_{i_1} = \mathtt{True}$ and $x_{i_3} = \mathtt{False}$; the assignment of $x_{i_2}$ does not affect the reachability. The coloured directed paths, are the temporal paths from the corresponding source. The edges with two colors and two arrows are used at the same time step by both colours.}
\label{fig:two-sources-sound}
\end{figure}

\ifshort
The proof is completed by observing that under this labelling scheme $s$ and $s'$ reach every vertex by time step 6.
\fi
\iflong
To see that under this labelling scheme $s$ and $s'$ reach every vertex by time step 6, observe the following. It is straightforward to see from Figure~\ref{fig:two-sources-sound} that every vertex of the ``ladder'' and the vertices $s$ and $s'$ are reachable by both sources by time step 6. For the remaining vertices, let us focus on clause $c_i = (x_{i_1} \vee x_{i_2} \vee x_{i_3})$ and let $V_i = \{a_i, a'_i, z_i, z'_i, b_{i_1}, b'_{i_1}, w_{i_1}, b_{i_2}, b'_{i_2}, w_{i_2}, b_{i_3}, b'_{i_3}, w_{i_3}\}$.
\begin{itemize}
    \item If $x_{j} = \mathtt{True}$ for some $j \in \{i_1, i_2, i_3\}$, then $a'_i$ is reached by $s$ at time step 5. Hence, according to Figure~\ref{fig:two-sources-sound}, by time step 6 all vertices in $V_i$, will be reached by $s$.
    \item If $x_{j} = \mathtt{False}$ for some $j \in \{i_1, i_2, i_3\}$, then $a_i$ is reached by $s'$ at time step 5. Hence, according to Figure~\ref{fig:two-sources-sound}, by time step 6 all vertices in $V_i$, will be reached by $s$.
\end{itemize}
Since we start from a solution for \monlinnaesat, we know that for every $i \in [m]$ there exist $x_{j} = \mathtt{True}$ and $x_{j'} = \mathtt{False}$ with $j, j' \in \{i_1, i_2, i_3 \}$. Thus, for every $i \in [m]$, all vertices of $V_i$ are reached by both sources by time step 6.
\fi
\end{proof}
\iflong
\begin{lemma}
\fi
\ifshort
\begin{lemma}[$\star$]
\fi
\label{lem:two-source-correct}
If the constructed temporal graph \gcal admits a solution of value at most $6$,
then there is a satisfying assignment for the instance of \monlinnaesat.
\end{lemma}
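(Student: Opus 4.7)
The plan is to convert the assumed solution into an NAE-satisfying assignment by inspecting, for each variable \(x_i\), how the two edges at \(w_i\) are labelled in the delayed temporal graph. Writing \(\ell\) for the labels after delay, I will set \(x_i=\mathtt{True}\) if \(\ell(b_iw_i)<\ell(w_ib'_i)\), \(x_i=\mathtt{False}\) if the opposite strict inequality holds, and arbitrarily if the two labels coincide. The goal is then to show that every clause \(c_j\) contains at least one true and at least one false literal under this assignment.

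The key structural observation I would use is that the auxiliary leaves pin down long reachability requirements: \(z'_j\) has only \(a'_j\) as a neighbour, so \(s\) reaching \(z'_j\) by time 6 forces \(s\) to reach \(a'_j\) by time 5, and symmetrically \(s'\) must reach \(a_j\) by time 5. The main step is a BFS enumeration in the underlying static graph showing that the distance from \(s\) to \(a'_j\) is exactly 5 and that every length-5 path between them has the form
\[s\to a_{j'}\to b_i\to w_i\to b'_i\to a'_j,\qquad x_i\in c_{j'}\cap c_j,\]
with \(j'=j\) allowed. I expect this to follow level-by-level: at depth 2 the only progress from \(s\) towards the primed side leads into \(\{b_i\}\); at depth 3 this extends to \(\{w_i\}\); at depth 4 only \(\{b'_i\}\) is newly reachable on that side; and at depth 5 the only way to reach \(a'_j\) is through a \(b'_i\) with \(x_i\in c_j\).

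Once the path shape is pinned down, I push the conclusion through: a temporal path of length 5 whose arrival is at most 5 must have labels exactly \(1,2,3,4,5\) in order, so \(\ell(b_iw_i)=3<4=\ell(w_ib'_i)\), which gives \(x_i=\mathtt{True}\) with \(x_i\in c_j\). The mirror argument applied to the length-5 path from \(s'\) to \(a_j\) produces a variable \(x_{i'}\in c_j\) with \(\ell(w_{i'}b'_{i'})<\ell(b_{i'}w_{i'})\) and hence \(x_{i'}=\mathtt{False}\). Since the two labels at \(w\) cannot be simultaneously ordered both ways, \(i\neq i'\), and clause \(c_j\) is NAE-satisfied.

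The step I expect to require the most care is the BFS enumeration, where I must rule out the use of ladder edges and of the ``horizontal'' edges \(b_ib_{i'}\), \(w_iw_{i'}\), \(b'_ib'_{i'}\) as shortcuts. The ladder cannot help because the static distance from \(s\) to \(s'\) through either rail is already 6, so any route through \(s'\) needs at least 7 edges to reach \(a'_j\). The horizontal edges connect two vertices at the same BFS level and hence cannot decrease the distance to \(a'_j\); inserting one into a candidate path would push its length above 5. Ensuring completeness of this case analysis, and checking that no cross-gadget detour exploits a shared variable between two clauses, is the crux of the lemma.
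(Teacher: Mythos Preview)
Your proposal is correct and follows essentially the same approach as the paper: both arguments use the fact that the static distance from \(s\) to \(z'_j\) (equivalently, to \(a'_j\) plus one) is exactly six, force the temporal path to traverse a \(b_i\text{--}w_i\text{--}b'_i\) segment for some \(x_i\in c_j\) with labels \(3,4\) in one order or the other, and then read off the truth assignment from that order. Your formulation is in fact slightly more careful than the paper's, since you allow the first step of the length-5 path to be \(s\to a_{j'}\) for any \(j'\) with \(x_i\in c_{j'}\cap c_j\), whereas the paper's assertion that every shortest \(s\text{--}z'_j\) path stays inside the vertex set \(V_j\) of clause \(c_j\) is not literally true when \(x_i\) is shared between clauses---though this does not affect the conclusion.
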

\iflong
\begin{proof}
Assume that there is a labelling on the edges of \gcal such that  every vertex of \gcal is reachable by time step 6 by both sources. Let $V_i = \{a_i, a'_i, z_i, z'_i, b_{i_1}, b'_{i_1}, w_{i_1}, b_{i_2}, b'_{i_2}, w_{i_2}, b_{i_3}, b'_{i_3}, w_{i_3}\}$ for some $i \in [m]$; $V_i$ contains the vertices related to clause $c_i = (x_{i_1} \vee x_{i_2} \vee x_{i_3})$. Let us focus on the vertices $z_i$ and $z'_i$. Observe that the distance between $s$ and $z'_i$, and the distance between $s'$ and $z_i$ in the underlying graph of \gcal is 6. In addition, observe that any static path between $s$ and $z'_i$ in the underlying graph that has length 6 involves vertices {\em only} from $V_i$. The same holds for the vertices $s'$ and $z_i$. Hence, for every $i \in [m]$ the following two temporal paths have to exist:
\begin{itemize}
    \item $P_{i} = \big((sa_{i},1), (a_{i}b_{i_1},2), (b_{i_1}w_{i_1},3),  (w_{i_1}b'_{i_1},4)$, $ (b'_{i_1}a'_{i_1},5), (a'_{i}z'_{i_1},6)\big)$.
    \item $P'_{i} = \big((s'a'_{i},1), (a'_{i}b'_{i_2},2), (b'_{i_2}w_{i_2},3),  (w_{i_2}b_{i_2},4)$, $ (b_{i_2}a_{i_2},5),  (a_{i}z_{i_1},6)\big)$.
\end{itemize}
Since every edge of \gcal has only one label, it must be true that $i_1 \neq i_2$.
So, given this labelling scheme, we construct a truth assignment for $x_1, x_2, \ldots, x_n$ by setting $x_i = \mathtt{True}$ if the edge $b_{i}w_{i}$ has label 3 and $x_i = \mathtt{False}$ if the edge $b_{i}w_{i}$ has label 4. Since the edges of \gcal have only one label, we get that the truth assignment is valid. To see that this assignment is a solution for \monlinnaesat, observe that each $a$-vertex belongs to at least one $P$-path and at least one $P'$-path, hence by our construction of the truth assignment the corresponding clause for the $a$-vertex has at least one variable that is True and at least one variable that is False.
\end{proof}
\fi
\section{Tractable Cases}
\label{sec:upper-bounds}

In this section, we give 
algorithms for \reachfast for three special classes of underlying graphs: trees, parallel paths, and graphs of bounded treewidth. 
Due to Observation~\ref{obs:delaying}, we assume that all the labels of an edge form an interval starting at time step one, hence we will study {\em only} delaying operations.

\subsection{Trees}

\newcommand{\sdiam}{\mathtt{s}\xspace}
\newcommand{\sol}{\mathtt{\gcal}\xspace}

In this section, we consider temporal graphs with multiple sources where the underlying graph is a tree. To devise a polynomial-time algorithm, we want to limit the number of solutions we have to check in order to find an optimal one.  We start with an observatio that motivates the definition that follows it.

\begin{observation}
    Consider a tree graph $\gcal$ and a set of sources $S$. Let $t$ be the earliest time step for which there exists at least one vertex that is reached by all sources. Then, at time step $t$ there is exactly one vertex $u$ that has been reached by both sources or there are exactly two vertices $u,v$ that have been reached by all the sources, and $u,v$ are neighboring vertices.
\end{observation}
The previous observation motivates us to group solutions based on the first vertices that are reached by all sources. 
 
\begin{definition}
    Consider a tree graph $\gcal$ and a set of sources $S$. We define $B_u$ to be the set of solutions for the \reachfast problem where $u$ is the first vertex to be reached by every source and no other vertex has been reached by every source by the same time. We also define $B_{uv}$ to be the set of solutions where vertices $u,v$ is the first pair of adjacent vertices to be reached by every source. Note that we require $u,v$ to be reached at exactly the same time step by the last sources that reach them.
\end{definition}

\begin{observation}\label{obs:labelsPerEdge}
    Consider a tree graph $\gcal$ with a set of sources $S$. If there exists a path between two sources where at least two edges have a single label, then there is no solution to the \reachfast problem. If there exists a path between two sources where exactly one edge has a single label, then set $B_u$ is empty.
\end{observation}

We will provide an algorithm that computes the best solution of each set in polynomial time and then picks the best solution among the sets for the optimal solution. 
We first show how our algorithm computes a solution for a set $B_u$ with a single first vertex $u$ and then extend to the solutions for sets $B_{u,v}$ with a pair of adjacent vertices. The algorithm computes a solution $\sol_u$ for the \reachfast problem, where vertex $u$ is the first vertex to be reached by every source in two phases as follows: Let $\sdiam$ be the maximum distance between vertex $u$ and every source $v\in S$. We root the tree to vertex $u$ and put every vertex $v$ in layer $\ell$ if it has distance $\ell$ from the root. In the first phase, we delay the first label on each edge between layer $i$ and $i+1$ to value $\sdiam-i$. If that operation would give a $0$ or negative value, we do not perform the delay. This relabeling guarantees that in the current graph, every source reaches $u$ at time step $\sdiam+1$. In the second phase, we delay the second label on each edge between layer $i$ and $i+1$ to value $\sdiam+i+1$. See Figure \ref{fig:algorithm1} for an illustration. Note that Observation \ref{obs:labelsPerEdge} guarantees that there are enough labels to guarantee that the relabelling is correct

To compute the solution $\sol_{u,v}$ for the \reachfast problem where vertices $u,v$ are the first vertices to be reached by every source at exactly the same time step, the algorithm does the following. We virtually cut edge $uv$ and we now have 2 connected trees $G_u$ and $G_v$. Let $\sdiam_u$ be the maximum distance between vertex $u$ and every source in $G_u$ and let $\sdiam_v$ be the maximum distance between vertex $v$ and every source in $G_v$. Let $\sdiam$ be the maximum between $\sdiam_u+1$ and $\sdiam_v+1$. We root $G_u$ at edge $u$ and $G_v$ at edge $v$ and put every vertex in a layer based on its distance from its respective root. In the first phase, we delay the first label on each edge between layer $i$ and $i+1$ to value $\sdiam-i$ in both $G_u$ and $G_v$. If that operation would give a $0$ or negative value, we do not perform the delay. We also delay the first label of edge $uv$ to $\sdiam+1$. In the second phase, we delay the second label on each edge between layer $i$ and $i+1$ to value $\sdiam+i+2$. This completes the description. We will now show that solution $\sol_u$ is better than all other solutions in $B_u$ and solution $\sol_{u,v}$ is better than all other solutions in $B_{u,v}$.

\begin{figure}[pos=h]
\begin{center}
\includegraphics[page=2,scale=0.55]{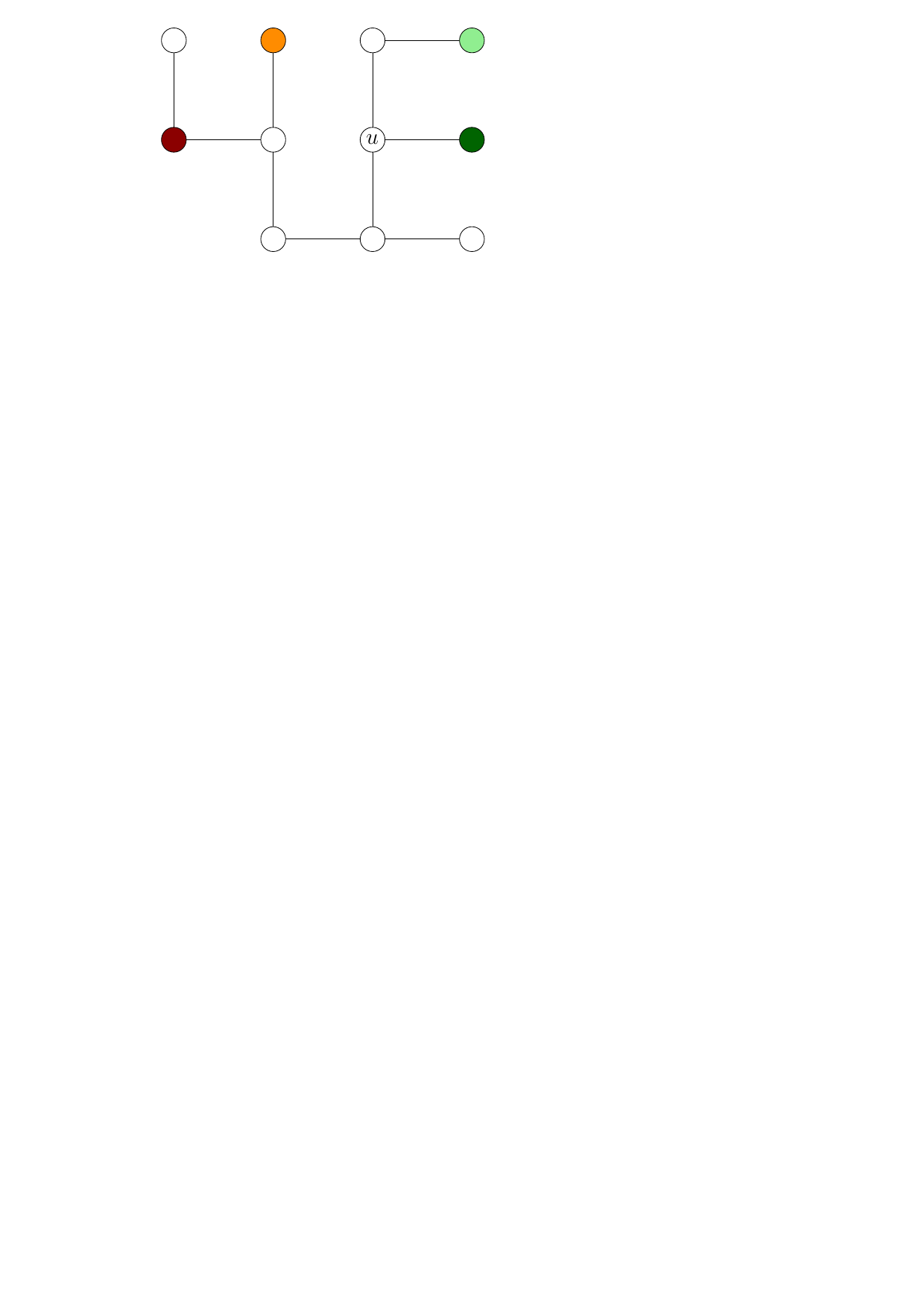}
\hspace{1.5cm}
\includegraphics[page=3,scale=0.55]{AlgorithmVertex.pdf}
\hspace{1.5cm}
\includegraphics[page=4,scale=0.55]{AlgorithmVertex.pdf}
\end{center}
\caption{An example of the algorithm running on a tree graph with 4 sources for set $B_u$. The left figure shows the input graph. The middle figure shows the graph after phase 1 has changed the labels. The right figure shows the graph after phase 2 has changed the labels.}
\label{fig:algorithm1}
\end{figure}

\begin{figure}[pos=h]
\begin{center}
\includegraphics[page=2,scale=0.55]{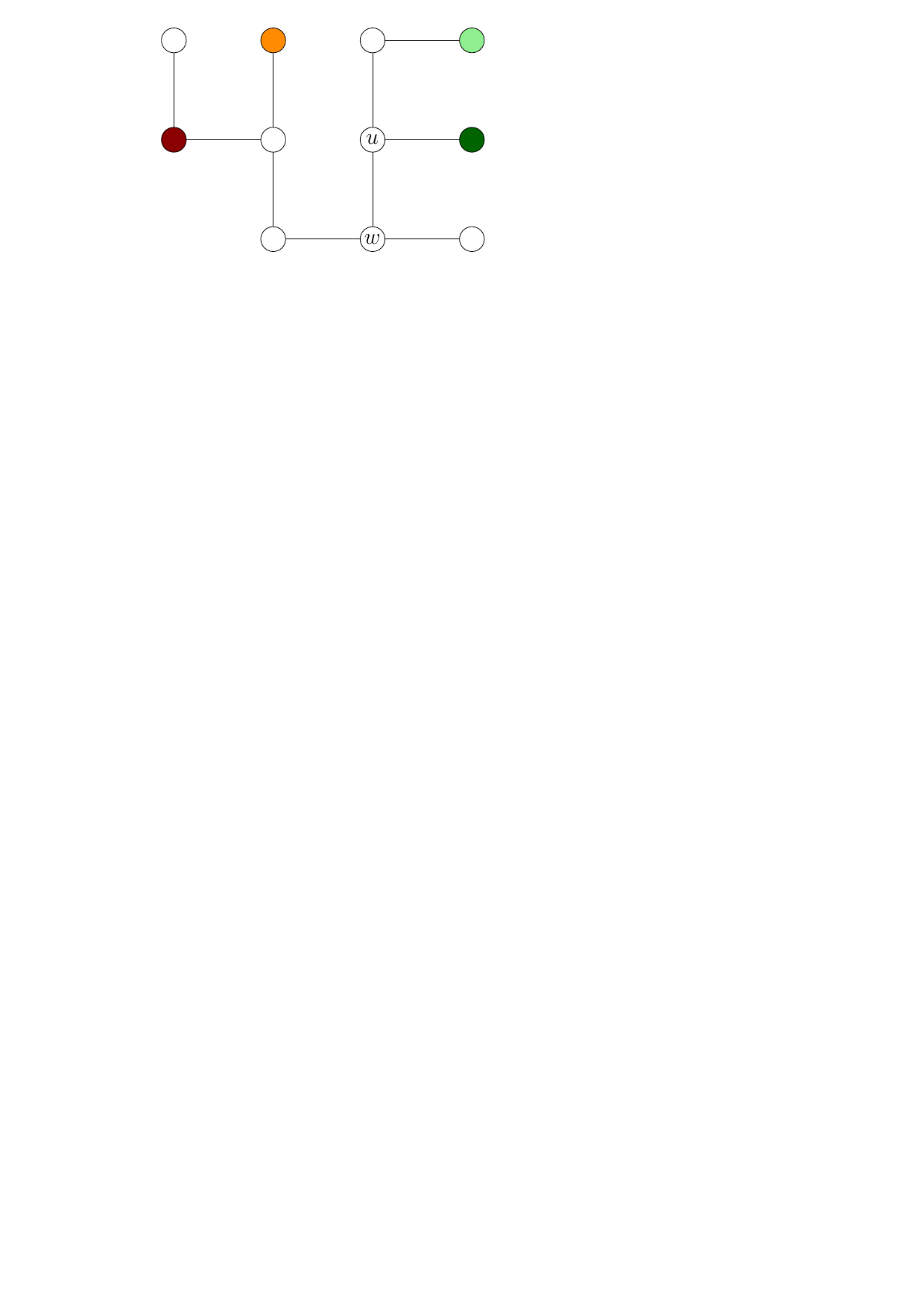}
\hspace{1.5cm}
\includegraphics[page=3,scale=0.55]{AlgorithmEdge.pdf}
\hspace{1.5cm}
\includegraphics[page=4,scale=0.55]{AlgorithmEdge.pdf}
\end{center}
\caption{An example of the algorithm running on a tree graph with 4 sources for set $B_{u,w}$. The left figure shows the input graph. The middle figure shows the graph after phase 1 has changed the labels. The right figure shows the graph after phase 2 has changed the labels.}
\label{fig:algorithm1}
\end{figure}

\begin{lemma}\label{lem:optimal}
    Let $\gcal$ be a tree graph where every edge has at least two labels and $\sol_u,\sol_{u,w}$ be the solutions constructed by the algorithm. For every solution $\gcal'\in B_u$, $\max_{v\in S}~ \reachtime(v, \sol_u)\le \max_{v\in S}~ \reachtime(v, \gcal')$ and for every solution $\gcal''\in B_{u,v}$, $\max_{v\in S}~ \reachtime(v, \sol_{u,v})\le \max_{v\in S}~ \reachtime(v, \gcal'')$.
\end{lemma}

\begin{proof}
    Let us first analyze the solution $\sol_u$ given by the algorithm. Let $\depth$ be the depth of tree $\gcal$ if we root it at $u$, and let $\sdiam$ be the maximum distance between vertex $u$ and every other source $v\in S$. The value of the solution is at most $\sdiam+\depth+2$, since every source can reach vertex $u$ by time step $\sdiam+1$, and then every other vertex can be reached from $u$ by time step $\depth+1$. Our goal is to show that every other solution in $B_u$ is at least $\sdiam+\depth+2$ which would give us a tight bound on the best solution. Let $sink_{u,\gcal'}$ be the minimum time step by which all sources have reached vertex $u$ in a solution $\gcal'$. We will show that in every solution in $B_u$: (i) once the last source $v$ reaches $u$, $v$ needs at least $\depth+1$ time steps to reach at least one vertex it has not reached already and (ii) $sink_{u,\gcal'}\geq \sdiam+1$.
    
    First note that in every solution in $B_u$ vertex $u$ is reached by all of the sources at time step $\sdiam+1$ or afterwards. This holds because $\sdiam$ is equal to the static distance between some source $v$ and $u$ in $\gcal$. Therefore $sink_{u,\gcal'}\geq \sdiam+1$ and claim (ii) holds true.

    Let $t$ be the time step that the last sources $v_1,v_2,\ldots,v_k$ reaches $u$ in some solution in $B_u$. There exists some vertex $z$ with static distance $\depth$ from $u$. We will argue that the following is true for at least one of the sources $v_1,v_2,\ldots,v_k$: the closest vertex to $z$ that the source $v_i$ has reached is $u$. This implies that $v$ needs at least $\depth+1$ time steps to reach $z$ which proves claim (ii). Consider path $P_{uz}=(u,x_1,x_2,\ldots,x_l,z)$. We will argue that some source $v_i$ has not yet reached vertex $x_l$ by time step $t$. Assume otherwise, that every source $v_1,v_2,\ldots,v_k$ has reached vertex $x_l$ by time step $t$. This implies that the first vertex to be reached by every source is vertex $x_l$ and not vertex $u$, or vertices $x_l$ and $u$ were the first vertices to be reached at the same time. Both cases lead to a contradiction. 

    The proof for solution $\sol_{u,w}$ works in a similar fashion. 
\end{proof}

\begin{theorem}
    \reachfast can be solved in polynomial time on tree graphs.
\end{theorem}

\begin{proof}
    By Lemma \ref{lem:optimal} and we can compute the best solution for sets $B_u$ and $B_{u,v}$ for every vertex of the graph. By Observation \ref{obs:labelsPerEdge} we know that the union of all sets contains every possible solution. Thus, computing the best solution for these $2n-1$ sets gives us the optimal solution to the \reachfast problem in polynomial time.
\end{proof}

\subsection{Bounded-Treewidth Graphs}

In this subsection, we show that \reachfast\ admits an \FPT\ algorithm parameterized by treewidth plus the value of the sought solution $D$; i.e., when looking for a solution \(\gcal'\) such that for all $v\in S$ it holds that $ \reachtime(v, \gcal')\le D$. $D$ can be viewed as a \emph{deadline} by which we have to reach all vertices of the graph from all sources. 

\paragraph{\bf Treewidth.}
A \emph{tree-decomposition}~$\mathcal{T}$ of a graph $G=(V,E)$ is a pair 
$(T,\chi)$, where $T$ is a tree (whose vertices we call \emph{nodes}) and $\chi$ is a function that assigns each node $t$ a set $\chi(t) \subseteq V$ such that the following conditions hold.
\begin{itemize}[noitemsep]
	\item For every $uv \in E$, there is a node	$t$ such that $u,v\in \chi(t)$.
	\item For every vertex $v \in V$,
	the set of nodes $t$ satisfying $v\in \chi(t)$ forms a subtree of~$T$.
\end{itemize}
\noindent 
The \emph{width} of a tree-decomposition $(T,\chi)$ is the size of a largest set $\chi(t)$ minus~$1$, and the \emph{treewidth} of the graph $G$,
denoted $\tw(G)$, is the minimum width of a tree-decomposition of~$G$.

\paragraph{ \bf Monadic Second Order Logic.} We consider \emph{Monadic Second Order} (MSO) logic on labeled graphs in terms of their incidence structure whose universe contains vertices and
edges; the incidence between vertices and edges is represented by a
binary relation. \iflong We assume an infinite supply of \emph{individual
 variables} $x,x_1,x_2,\dots$ and of \emph{set variables}
$X,X_1,X_2,\dots$\ The \emph{atomic formulas} are 
$V x$ (``$x$ is a vertex''), $E y$ (``$y$ is an edge''), $I xy$ (``vertex $x$
is incident with edge $y$''), $x=y$ (equality), $x\neq y$ (inequality),
$P_a x$ (``vertex or edge $x$ has label $a$''), and $X x$ (``vertex or
edge $x$ is an element of set $X$'').  \emph{MSO formulas} are built up
from atomic formulas using the usual Boolean connectives
$(\lnot,\land,\lor,\rightarrow,\leftrightarrow)$, quantification over
individual variables ($\forall x$, $\exists x$), and quantification over
set variables ($\forall X$, $\exists X$).

\fi 
\smallskip
\noindent
Let $\Phi(X_1,X_2,\ldots, X_k)$ be an MSO formula with a free set variables $X_1,\ldots, X_k$. For a labeled
graph $G=(V,E)$ and sets $S_1,\ldots, S_k\subseteq E$ we write $G \models \Phi(S_1,\ldots, S_k)$ if
the formula $\Phi$ holds true on $G$ whenever $X_i$ is instantiated with~$S_i$ for all \(i\in [k]\).

\smallskip
\noindent
\iflong The following result (an extension of the well-known Courcelle's Theorem~\cite{Courcelle90}) will be useful to show that \reachfast\ admits an FPT algorithm parameterized by treewidth and the value of the sought solution.
\fi 

\begin{fact}[\cite{ArnborgLagergrenSeese91}]\label{fact:MSO} 
  Let $\Phi(X_1,\ldots, X_k)$ be an MSO formula with a free sets variables $X_1,\ldots, X_k$ and $w$ a
  constant. Then there is a linear-time algorithm that, given a labeled
  graph $G=(V,E)$ of treewidth at most $w$, 
  decides whether there exists $k$ sets $S_1,\ldots, S_k,\subseteq E(G)$ such that $G \models \Phi(S_1,\ldots, S_k)$. Moreover, if such solution exists, the algorithm constructs one such solution. 
\end{fact}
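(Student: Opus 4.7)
The plan is to reduce to the classical decision version of Courcelle's theorem and then add a constructive layer via standard tree-automata techniques. First I would encode the $k$ free set variables $X_1,\ldots,X_k$ as an additional edge labeling over the alphabet $\{0,1\}^k$: a choice of $S_1,\ldots,S_k\subseteq E(G)$ is exactly the same thing as extending the labeling of $G$ so that the $i$-th coordinate of the extra label of an edge $e$ records whether $e\in S_i$. The formula $\Phi(X_1,\ldots,X_k)$ then rewrites as a fixed MSO sentence $\Phi^\star$ (without free variables) over the enlarged vocabulary; membership $X_i\,x$ becomes the unary predicate ``$x$ has $i$-th coordinate equal to $1$''.

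Second, I would invoke the standard form of Courcelle's theorem on $\Phi^\star$: for the fixed sentence $\Phi^\star$ and the fixed treewidth bound $w$, one can build a deterministic finite bottom-up tree automaton $A$ that reads a nice tree decomposition of width at most $w$ of an enriched labeled graph and accepts iff $\Phi^\star$ holds. Since the extra label coordinates are precisely the part of the input we are free to choose, projecting them out yields a nondeterministic tree automaton $B$, with the same set of states as $A$, whose transitions at each node $t$ guess which newly introduced edges go into each $S_i$. Running $B$ on a nice tree decomposition of the original graph $G$ accepts iff there exist $S_1,\ldots,S_k$ with $G\models\Phi(S_1,\ldots,S_k)$. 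Determinizing $B$ gives an automaton of size depending only on $|\Phi|$ and $w$; combined with Bodlaender's linear-time algorithm for computing a width-$w$ nice tree decomposition, this yields the claimed linear-time decision algorithm.

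For the constructive part, I would use a standard two-pass procedure on the nice tree decomposition. The bottom-up pass computes, at each node $t$, the set of states of $B$ reachable after processing the subtree rooted at $t$; equivalently, it runs the deterministic projection automaton. If an accepting state is reached at the root, the top-down pass fixes a successful run by selecting, at each internal node, a transition of $B$ together with the corresponding ``guess'' of which newly introduced edges belong to each $S_i$, consistent with the states already chosen at the parent and reachable at the children. Because each bag has bounded size and the automaton has a bounded state space, each node is processed in constant time, so the entire construction runs in linear time. The output edge sets $S_1,\ldots,S_k$ are then read off from the recorded guesses.

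The main obstacle is really conceptual rather than algorithmic: one must argue that the projection and determinization can be carried out \emph{once}, with cost depending only on $|\Phi|$ and $w$, so that the final running time is genuinely linear in $|V|+|E|$. The determinization step is doubly exponential in $|\Phi|$, but since both $|\Phi|$ and $w$ are fixed constants this only affects the hidden constant. The cleanest way to organize this is to develop the automaton-based machinery of Courcelle's theorem directly over an enriched signature that already accommodates free set variables, which is exactly the extension carried out in~\cite{ArnborgLagergrenSeese91}.
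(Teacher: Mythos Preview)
Your proposal is correct and matches the paper's own justification: the paper treats this as a cited fact and only adds a brief remark that the tree automaton from \cite{ArnborgLagergrenSeese91} stores $f(w,|\Phi|)$-many records, so one makes it constructive by remembering a viable representative per record, together with Bodlaender's linear-time tree-decomposition algorithm. Your encoding of the free set variables as extra edge labels, projection to a nondeterministic automaton, and two-pass witness extraction is exactly the standard way to spell out that remark in detail.
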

We note that \cite{ArnborgLagergrenSeese91} does not explicitly construct the solution, however, for each formula, they construct a tree automaton that works along the decomposition and stores $f(w, |\Phi(X_1,\ldots, X_k)|)$-many records (for some computable function $f$) for the optimal solutions. To make it constructive, for each record we only need to remember a viable representative.
\iflong 
Furthermore, there is also requires a
tree-decomposition of width at most $w$ to be provided with the
input. However, 
for a graph of treewidth at most $w$ such a tree
decomposition can be found in linear time~\cite{Bodlaender96}, hence we can
drop this requirement from the statement of the theorem.
\fi

As in the previous sections, we again assume that we are only allowed to delay the edges, which we can assume without loss of generality due to Observation~\ref{obs:delaying}. \iflong The main result of this subsection is the following theorem. \fi
 \iflong
 \begin{theorem}
 \fi 
 \ifshort
 \begin{theorem}[$\star$]
 \fi \label{thm:tw+deadline}
 	Given {a temporal graph} $\gcal := \tuple{G, \ecal}$ such that $G$ has treewidth bounded by a constant, a set of sources $S\subseteq V(G)$, and a constant $D$, there is a linear-time algorithm that either computes a solution $\gcal'$ of value at most $D$ or decides that the value of an optimal solution is~at~least~$D+1$. 
 \end{theorem}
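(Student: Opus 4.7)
The plan is to invoke Fact~\ref{fact:MSO} by constructing, from \(\gcal\), a labeled version of the underlying graph \(G\) together with a Monadic Second Order formula \(\Phi(X_1,\ldots, X_D)\) whose size depends only on the constants \(D\) and \(\tw(G)\), such that \(G\models \Phi(S_1,\ldots,S_D)\) holds for some edge sets \(S_1,\ldots, S_D\subseteq E(G)\) if and only if \reachfast\ on \(\gcal\) admits a solution of value at most \(D\). The intended semantics is that \(S_t\) is the set of edges present at time step \(t\) in the delayed schedule. Since both \(D\) and \(\tw(G)\) are constants, Fact~\ref{fact:MSO} immediately yields a linear-time algorithm that decides feasibility and, when feasible, outputs witness sets \(S_1,\ldots, S_D\) from which the solution \(\gcal'\) can be recovered.

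The first step is a light preprocessing. Observe that any label appearing at time \(t>D\) on an edge is irrelevant for achieving value \(\leq D\), since delays only push labels forward in time. Hence, for each edge \(e\), only the \emph{profile} \(L_e\cap [1,D]\subseteq [D]\) matters, and there are only \(2^D\) possible profiles. We label each edge with its profile and we label each vertex by a boolean indicating whether it lies in \(S\). This yields a labeled graph with a constant number of label types, leaving the treewidth of \(G\) unchanged.

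The formula \(\Phi(X_1,\ldots,X_D)\) is the conjunction of two bounded-size pieces. The \emph{feasibility} piece asserts, for each edge \(e\), that the prospective schedule \(\{t\in [D]: e\in X_t\}\) is a valid delay of the original labels of \(e\); by a standard Hall-type matching argument this is equivalent to the prefix inequality \(|\{t'\leq t: e\in X_{t'}\}|\leq |L_e\cap [1,t]|\) for every \(t\in [D]\). Since profiles live in the constant set \([D]\), we precompute, for each of the \(2^D\) original profiles, the finite list of feasible new profiles, and encode the constraint as a bounded disjunction guarded by the edge-profile label. The \emph{reachability} piece asserts \(\forall s\,\forall v\,\bigl(P_{\mathrm{src}}(s)\to \mathrm{reach}_{\leq D}(s,v)\bigr)\), where \(\mathrm{reach}_{\leq D}(s,v)\) is the disjunction, over path lengths \(k\in [D]\) and strictly increasing time sequences \(t_1<\cdots<t_k\) in \([D]\), of existential chains \(\exists u_0,e_1,u_1,\ldots, e_k,u_k\) stating that \(u_0=s\), \(u_k=v\), and \(I u_{i-1}e_i\wedge I u_ie_i\wedge X_{t_i} e_i\) for every \(i\in[k]\).

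The main obstacle is the feasibility encoding, because MSO cannot directly count multiplicities of labels; the key insight that unlocks the formula is that feasibility is a purely \emph{local} property of each edge which, thanks to the constantly many possible profiles, collapses to a finite case analysis. Once Fact~\ref{fact:MSO} returns witnessing sets \(S_1,\ldots, S_D\), an explicit delay function is recovered per edge by a trivial greedy matching between its original labels in \([1,D]\) and the times in \(\{t: e\in S_t\}\), mapping earliest to earliest; any unmatched original labels are left at their original times (or pushed beyond \(D\)). This postprocessing runs in linear time and produces the explicit solution \(\gcal'\) claimed in the theorem.
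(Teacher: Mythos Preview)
Your proposal is correct and follows essentially the same approach as the paper: encode the problem as an MSO formula \(\Phi(X_1,\ldots,X_D)\) over a labeled version of the underlying graph, where \(X_t\) records the edges present at time \(t\) in the delayed schedule, and then invoke Fact~\ref{fact:MSO}. The only cosmetic difference is in the feasibility check: the paper phrases it via a \emph{domination} relation \(J\preceq I\) between the original profile \(J\) and the new profile \(I\) (namely \(|I|\le |J|\) and the \(r\)-th smallest element of \(I\) is at least the \(r\)-th smallest of \(J\)), whereas you phrase it as the Hall-type prefix inequality \(|\{t'\le t: e\in X_{t'}\}|\le |L_e\cap[1,t]|\); these two conditions are equivalent, and both collapse to a bounded disjunction over the \(2^D\) possible profiles.
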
 
\ifshort
\begin{proof}[Proof Sketch]
\fi
\iflong
\begin{proof}
\fi     
	To prove the theorem we construct a labeled graph $H$ such that treewidth of $H$ is at most treewidth of $G$ and an MSO formulation $\Phi(\mathbf{X})$ such that the size of formula $\Phi(\mathbf{X})$ depends only on $D$ and $\mathbf{X} = (X_1^1, X_1^2,\ldots X_1^D, X_2^1,X_2^2, \ldots X_D^D)$. 
	\iflong Before we start our construction, o\fi\ifshort O\fi bserve that if for some edge $e\in E(G)$ and some time step $i$, we have $\tr(i,e)>D$, then we will never delay the edge $e$ to the time step $i$\iflong , as this would not help us to reach any new vertex by the time $D$. Therefore, for the sake of presentation of the proof, we will \fi\ifshort{} and we can \fi assume that whenever $\tr(i,e)>D$, then $\tr(i,e)=D+1$. 
	For sets \(S_1, \ldots, S_D\) of edges, let \(\gcal[S_1, \ldots, S_D]\) denote the temporal graph obtained from $\gcal$ by delaying, for each $i\in [D]$, every edge in $S_i$ to $E_i$. Moreover, for every $i\in [D]$ let $\{S_i^1,\ldots, S_i^D\}$ be the partition of $S_i$ such that edge $e\in S_i$ belongs to $S_i^j$ if and only if $\tr(i,e)=j$. We will show that \(H\models \Phi(S_1^1,\ldots, S_D^D)\) if and only if in the temporal graph \(\gcal[S_1, \ldots, S_D]\) it holds for all $v\in S$ that $\reachtime(v, \gcal[S_1, \ldots, S_D])\le D$. The result then follows by Fact~\ref{fact:MSO}. 
	To construct $H$ we let
	\begin{enumerate}
		\item $V(H)=V(G)$,
		\item $E(H)= \bigcup_{i\in [D]}E_i$ (note that $E(H)\subseteq E(G)$), 
		\item the edge $e\in E(H)$ has label $(I,T)$, where $I\subset [D]$ and $T = (t_1,\ldots, t_D)\in [D+1]^D$ such that \iflong \begin{itemize}
		    \item for all $i\in [D]$ we have $i\in I$ if and only if $e\in E_i$;
		    \item for all $j\in [D]$, we have $\tr(j,e)=t_j$.
		\end{itemize} 
		\fi\ifshort (1) for all $i\in [D]$ we have $i\in I$ if and only if $e\in E_i$, and (2) for all $j\in [D]$, we have $\tr(j,e)=t_j$,
		\item every vertex in $S$ has label $S$.\fi
	\end{enumerate}

\newcommand{\checkEdge}{\operatorname{allowed\_delay}}
\newcommand{\checkTraversal}{\operatorname{traversal\_time}}
\newcommand{\checkPath}{\operatorname{path}}
\newcommand{\isEdgeSet}{\operatorname{is\_edge\_set}}
\ifshort 
It is easy to see that $H$ is a subgraph of $G$ and hence the treewidth of $H$ is bounded by the treewidth of $G$. The MSO formulation $\Phi(\mathbf{X})$ is a conjunction of four building blocks (subformulas) that can each be easily expressed as an MSO formulation of the size depending only on $D$. The first block check that each of the sets $X^j_i$ is an edge set. The second block checks that every edge in $X_i^j$ has a label $(I,T)$ with $t_i=j$. The third block checks that if some edge is precisely at the time steps in $I\subseteq [D]$ after the delaying, that is the edge appears in some $X_i^j$ for every $i\in I$ and no other $X_i^j$, then we can delay this edge to these time steps. This is simply a big conjuction of small subformulas stating that if an edge is precisely in some given $X_i^j$'s then the edge has one of the labels that allow for these $X_i^j$'s. 
Finally, the fourth block verifies that for every source vertex $s$ and every other vertex $t$ in the graph, there is an $s$-$t$ path using only edges in some $X_i^j$'s. Note that each such path has to have length at most $D$ and each edge on the path should belong to one of $X_i^j$. Moreover, such temporal path is valid exactly when for two consecutive edges $e_1\in X_{i_1}^{j_1}$, $e_2\in X_{i_2}^{j_2}$, we have $i_1+j_1\le i_2$. Hence the fourth block is a disjunction over all less than $D^3$ many valid possibilities of an $s$-$t$ path that reaches the target by time $D$ and checking that there exists an $s$-$t$ path with edges in the prescribed sets $X_i^j$. 
\end{proof}
\fi

\iflong
    Note that if $e$ has label $(I,T)$, then $T = (\tr(1,e), \tr(2,e),\ldots, \tr(D,e))$. For convenience, for the rest of the proof we will denote by $T_e$ the vector $(\tr(1,e), \tr(2,e),\ldots, \tr(D,e))$. Note that $T_e$ is fixed for given edge $e$ and does not depend on $I$.

	Now we construct the formula  $\Phi(\mathbf{X})$. 
	First, we need check that if the algorithm from Fact~\ref{fact:MSO} outputs $(S_1^1,\ldots, S_D^D)$ such that the edge $e$ appears in sets $S_{i^e_1}^{t^e_1}, \ldots, S_{i^e_q}^{t^e_q}$, for some $q\in [D]$, then $e$ appears in at least $q$ time steps and can be delayed to time steps ${i^e_1}, \ldots, {i^e_q}$ and that for all $x\in q$ we have $\tr(i^e_x, e) = t^e_x$. For two sets of indices $I,J\subseteq [D]$, we say that $I$ \emph{dominates} $J$, denoted $J\preceq I$, if $I = \{i_1,\ldots, i_q\}$ such that $i_1< i_2\cdots< i_q$, $J = \{j_1,\ldots, j_p\}$ such that $j_1< j_2\cdots< j_p$, $q\le p$, and for all $r\in [q]$ we have $j_r\le i_r$. Note that if we can delay the edge $e$ to appear at time steps in the set $I$, then the edge $e$ has in $H$ some label $(J,T_e)$ such that $I$ dominates $J$. 
	 We define an auxiliary formula that takes as input an edge $e$, $I\subseteq [D]$
	\begin{align*}
	\checkEdge & (e,I) :=  \\ & \bigwedge_{T=(t_1,\ldots, t_D) \in [D+1]^D}\left(\Big(\bigvee_{J\subset [D]}P_{(J,T)}e\Big)\rightarrow
	    \Big(\big(\bigwedge_{i\in I}X_i^{t_i}e\land \bigwedge_{i\notin I\wedge j\in [D]}\lnot X^j_ie\big) \rightarrow \big(\bigvee_{J\preceq I}P_{(J,T)} e\big)\Big)\right).    
	\end{align*}
	This formula checks that if an edge $e$ with $T_e = T$ is in the solution assigned to the sets $S_i^{\tr(i,e)}$, for all indices $i$ in some index set $I$, then it is possible to delay the occurrences of edge $e$ in $\gcal$ to time steps in $I$. 
	
	We now define additional auxiliary formula that checks that edge if an edge $e$ is in the set $S^j_i$, then $\tr(i,e)= j$. Let $\mathcal{T}_i^j = \{ (t_1, \ldots, t_D) \mid t_i = j \wedge \forall i'\in D~ 1\le t_{i'}\le D+1 \}$, that is all possible vectors $T$ in a label of an edge such that $t_i = j$. The following formula then check that if $e\in S_i^j$, then $e$ has a label $(I,T)$ such that at the $i$-th position of $T$ is $j$, i.e., $\tr(i,e)=j$, from the construction of $H$.
	\[
	   \checkTraversal(e) = \bigwedge_{i,j\in D}\Big(X_i^j e \rightarrow \bigvee_{I\subseteq [D], T\in \mathcal{T}_j^j}P_{(I,T)}e\Big).
	\]
	
	To check that there is path from a source vertex $s$ to a target vertex $t$ in the delayed graph $\gcal[X_1, \ldots, X_D]$ (where $X_i= X_i^1\cup \cdots \cup X_i^D$),  we will want $X_i$ to contain actually all edges that are used by some source-target path at the time step $i$ even if this edge is not delayed and had label $i$ in the original temporal graph. For $s,t\in V(G)$, $I = \{i_1,\ldots, i_q\}$, and $T= (t_1, \ldots, t_{q})$ such that $i_1 +t_1\le i_2$,  $i_2+ t_2\le i_3$, $\ldots$, $i_{q-1}+t_{q-1}\le i_q$, and $i_{q}+t_{q}\le D$ we define an auxiliary formula 
	\begin{align*}
		\checkPath(s,t,I, T) := \exists e_1, \ldots, e_q, v_1,\ldots v_{q-1} Ise_1\land Ite_q \land
		 \bigwedge_{j\in [q]} X_{i_j}^{t_j} e_j\land 	\bigwedge_{j\in [q-1]} \Big(I v_j e_j\land I v_j e_{j+1}\Big). 
	\end{align*}
	This formula checks that there is a path from $s$ to $t$ such that the $j$-th edge on the path is in the time step $i_j$ and its traversal time is $t_j$.
	Finally, note that we can check that $X_i$ is a set of edges by auxiliary formula 
	\[\isEdgeSet(X) :=\forall e Xe\rightarrow Ee.\]
	Now we are ready to construct the formula $\Phi(X_1^1,\ldots,X_D^D).$
	\begin{align*}
		\Phi(X_1^1,\ldots,X_D^D) & = \bigwedge_{i,j\in [D]} \isEdgeSet(X_i^j) \land \\
		& (\forall e \bigwedge_{I\subseteq [D]}\checkEdge(e,I)) \land \\
		& (\forall e \checkTraversal(e)) \land \\
		& (\forall s,t (P_S s \land V t) \rightarrow \\
		& \quad\quad\quad (\bigvee_{I\subseteq [D], T\in [D]^D} \checkPath(s,t,I,T))). 
	\end{align*}
	
	It is easy to see that we can construct graph $H$ in $\mathcal{O}(D\cdot(|V(H)|+|E(H)|))$ and $\Phi(X_1^1,\ldots,X_D^D)$ is a fixed formula whose construction does not depend on the temporal graph on the input and that can be constructed in time that depends only on $D$.
	\ifshort It only remains to show that \(H\models \Phi(S_1^1,\ldots, S_D^D)\) if and only if the temporal graph \(\gcal[S_1, \ldots, S_D]\), where $S_i = S_i^1\cup\cdots\cup S_i^D$ satisfies for all $v\in S$ that $\reachtime(v, \gcal[S_1, \ldots, S_D])\le D$. \fi 
	\fi
	\iflong	
	By Fact~\ref{fact:MSO} and because $D$ and $\tw(G)$ are constants, we can in linear time either find $S_1^1,\ldots, S_D^D$ such that \(H\models \Phi(S_1^1,\ldots, S_D^D)\) or decide that such sets do not exists. It remains to show that if \(H\models \Phi(S_1^1,\ldots, S_D^D)\) then the temporal graph \(\gcal[S_1, \ldots, S_D]\) satisfies for all $v\in S$ that $\reachtime(v, \gcal[S_1, \ldots, S_D])\le D$.
    On the other hand, we need to show that if there exists a temporal graph \(\gcal'\) obtained from $\gcal$ by delaying, such that for all $v\in S$ it holds that $\reachtime(v, \gcal')\le D$, then there exist sets $S_1,\ldots, S_k$ such that \(H\models \Phi(S_1,\ldots, S_D)\).

	First, let $S_1^1,\ldots, S_D^D$ be such that \(H\models \Phi(S_1,\ldots, S_D)\). For an edge $e$, let $I\subset[D]$ be such that for all $i\in [D]$ it holds that  $e\in S_i$ if and only if $i\in I$. Moreover let $(J,T_e)$ be the label of edge $e$ in $H$.
	From the construction of  \(\Phi(S_1^1,\ldots, S_D^D)\) it follows that \(\checkEdge(e,I)\) is satisfied. Moreover for $T=T_e$, \((\bigwedge_{i\in I}X^{t_i}_ie\land \bigwedge_{i\notin I, j\in [D]}\lnot X_i^je)\) is true by the choice of $I$, hence \((\bigvee_{J'\preceq I}P_{(J',T)} e)\) has to be true. However the edge $e$ has only single label $(J,T)$ in $H$. It follows that $J\preceq I$. In other words, $I = \{i_1,\ldots, i_q\}$ such that $i_1< i_2\cdots< i_q$, $J = \{j_1,\ldots, j_p\}$ such that $j_1< j_2\cdots< j_p$, $q\le p$, and for all $r\in [q]$ we have $j_r\le i_r$. Hence, we can for all $r\in [q]$ delay $j_r$ to $i_r$. It follows that we can delay every edge $e$ such that if $e\in S_i^j$, then  $e\in E'_i$, i.e., $e$ is available at time step $i$ in $\gcal[S_1, \ldots, S_D]$. It follows that $\gcal[S_1, \ldots, S_D]$ can indeed be obtained from $\gcal$ by delaying. Morover, since $\checkTraversal(e)$ is also satisfied, it follows that if $e$ is in $S_i^j$, then indeed $\tr(i,e)=j$, that is the traversal time of $e$ at the time step $i$ is $j$. 
	Now, let $s\in S$ be a source and $t\in V(G)$ a target vertex. We show that we can reach $t$ from $s$ by the time step $D$ in $\gcal[S_1, \ldots, S_D]$.  From the construction of \(\Phi(S_1,\ldots, S_D)\) it follows that \(\bigvee_{I\subseteq [D], T\in [D]^D} \checkPath(s,t,I,T)\) is satisfied. Hence there exists $I\subseteq [D]$ and $T\in [D]^D$ such that $\checkPath(s,t,I,T)$ holds. Let  $I = \{i_1,\ldots, i_q\}$ and $T\in [D]^D$ such that $i_j+t_j\le i_{j+1}$ for $j\in [q-1]$ and $i_q+t+q\le D$. Then there exists edges $e_1, \ldots, e_q$ such that $e_j$ is on $S_{i_j}^{t_j}$ and hence is available at time step $i_j$ in $\gcal[S_1, \ldots, S_D]$ and its traversal time in time step $i_j$ is $t_j$. Moreover, $e_1$ is incident with $s$, $e_q$ is incident with $t$ and two consecutive edges share a vertex. Therefore, $P=(e_j,{i_j})_{j=1}^q$ is a temporal path from $s$ to $t$ and $t$ is reachable from $s$ by time $D$. 
	
	On the other hand, let $\gcal'=(G, \ecal')$ be a temporal graph obtained from delaying some edges in $\gcal$ such that for all   $v\in S$ we have $\reachtime(v, \gcal')\le D$. We show that \(H\models \Phi(E_1^1,\ldots, E_D^D)\), where $E_i^j$ is the set of edges in $E_i'$ with $\tr(i,e)=j$. 
    Clearly, each $E_i^j$ is a set of edges, so \(\isEdgeSet(E_i^j)\) is satisfied for all $i\in [D]$. Now let $e\in H$, $I\subseteq [D]$ be such that $e\in E_i^j$ if and only if $i\in I$ and $\tr(i,e)=j$, and $(J,T_e)\subseteq [D]$ be the label of $e$ in $H$ (note that $e\in E_i$ if and only if $i\in J$). Since, $\gcal'$ is obtained from $\gcal$ by delaying, it is easy 
	to see that $J\preceq I$. Now we show that \((\bigwedge_{I'\subseteq [D]}\checkEdge(e,I'))\). If $T\neq T_e$, then \(\bigvee_{J\subset [D]}P_{(J,T)}e\) is false and implication is satisfied, 
	hence we only need to verify it for $T=T_e$. If $I'\neq I$, then \((\bigwedge_{i\in I}X_i^{t_i}e\land \bigwedge_{i\notin I,j\in [D]}\lnot X_i^je)\) is false and hence \(\checkEdge(e,I')\) is true.  
	Else, $J\preceq I$ and hence \( (\bigvee_{J'\preceq I}P_{J',T} e)\) is true, therefore $\checkEdge(e,I)$ is also true. It follows that  \((\bigwedge_{I'\subseteq [D]}\checkEdge(e,I'))\) is satisfied for $e$.  As we chose $e$ 
	arbitrarily, it follows that \((\forall e \bigwedge_{I\subseteq [D]}\checkEdge(e,I))\) is satisfied. It is easy to see that $\forall e \checkTraversal(e)$ is satisfied as well. Finally, it remains to show that $(\forall s,t (P_S s \land V t) \rightarrow (\bigvee_{I\subseteq [D], T\subseteq [D]^D} \checkPath(s,t,I,T)))$ is satisfied. Let $s\in S$ and $t\in V(G)$ (for any 
	other choice of $s$ and $t$ is the subformula satisfied vacuously) and let $P = (e_j, i_j)_{j=1}^q$ be a temporal path from $s$ to $t$ witnessing that $t$ is reachable from $s$ by time $D$. It is straightforward to verify that $\checkPath(s,t,\bigcup_{j\in [q]}\{i_j\}, tr(i_j,e_j)_{j\in [q]})$ is true, 
    witnessed by edges $e_1,\ldots, e_q$ and vertices $v_j=e_j\cap e_{j+1}$, for $j\in [q-1]$. Hence $(\forall s,t (P_S s \land V t) \rightarrow (\bigvee_{I\subseteq [D], T\in [D]^D} \checkPath(s,t,I,T)))$ is satisfied and \(H\models \Phi(E_1^1,\ldots, E_D^D)\).
\end{proof}
\fi

\subsection{Parallel Paths with two sources}
In this section, we study graphs composed by parallel paths $P_1, P_2, \ldots$ with common endpoints and two sources, where the endpoints are the sources, denoted $A$ and $B$. 
In what follows, for a clearer and more intuitive presentation, we will consider every source to be represented by a distinct color. Thus, the goal is to design algorithms that minimize the time every vertex is reached by all colors.
Our algorithm for this case relies on guessing the time that sources reach each other in an optimal solution. 
For each combination of guesses for the two sources, the algorithm also guesses which paths between $A$ and $B$ achieve the time of the optimal solution. Again, for each guessed path, the algorithm guesses how each source/color optimally reaches every vertex in the rest of the paths.
\iflong
\myalgo{myalgo:Two-Sources-Distinct-Paths}
\label{alg:two-sources-distinct-paths}
\begin{tcolorbox}[title=Algorithm~\ref{myalgo:Two-Sources-Distinct-Paths} \ifshort $(\star)$ \fi (Two Sources Parallel Paths)]

\noindent Let $P_1,P_2,\ldots, P_k$ be the parallel paths of the graph $G$ given as an input.
\noindent For every pair of time steps $r_1, r_2 \in [t_{max}+n-1]^2$ do the following.
        
     For every combination of paths $P_x,P_y$ (note that $P_x$ could be equal to $P_y$) between sources $A$ and $B$:
        \begin{enumerate}
            \item Decide whether $A$ can reach $B$ by time step $r_1$ via path $P_x$ and $B$ can reach $A$ by time step $r_2$ via path $P_y$. 
            \item If the answer is ``Yes'' for paths $P_x,P_y$ then:
            \begin{enumerate}
                \item For every path $P_w$ where $P_w\neq (P_x,P_y)$, take the subgraph defined by $P_w$ and solve the \reachfast problem on each subgraph by running a slight generalization of the algorithm for trees 5 times. The generalization differs to the original algorithm by having the sources become present at particular time steps. For each execution the time at which sources become present in the graph differs as follows:
                
                \iflong
                \begin{enumerate}
                    \item Color $c_A,c_B$ available at $A$ at time step $r_2+1$ and at $B$ at time step $r_1+1$; 
                    \item Color $c_A$ available at $A$ at time $1$ and $B$ at time $r_2+1$ and color $c_B$ available at $B$ at time step 1;
                    \item Color $c_B$ available at $B$ at time $1$ and $A$ at time $r_2+1$ and color $c_A$ available at $A$ at time step 1;
                    \item Color $c_A$ available at $A$ at time $r_2+1$ and $B$ at time $r_1+1$ and color $c_B$ available at $B$ at time step 1;
                    \item Color $c_B$ available at $B$ at time $r_1+1$ and $A$ at time $r_2+1$ and color $c_A$ available at $A$ at time step 1;
                \end{enumerate}
                After these 5 executions, we keep  the solution with minimum reaching-time for subgraph $P_w$.
                
                \fi
            \end{enumerate}
        \end{enumerate}
After the algorithm finishes, we output the solution for pair \((r_1, r_2)\) minimizing the maximum reaching-time over all paths $P_w$.

\end{tcolorbox}

\fi

\iflong
\begin{theorem}

Algorithm \ref{alg:two-sources-distinct-paths} solves the \reachfast problem on graphs that consist of several parallel paths with two common endpoints, where the two endpoints are the sources.
\fi
\ifshort
\begin{theorem}[$\star$]

    There is a polynomial time algorithm for the \reachfast problem on graphs that consist of several parallel paths with two common endpoints, where the two endpoints are the sources.
    \fi
\end{theorem}
\iflong
\begin{proof}
Let us discuss the correctness of the algorithm. Note that whenever color $c_A$ (resp. $c_B$) reaches an edge $w_iw_j$ at time step $t_k$ and the edge has two labels at time steps $t_l,t_m$, where $(t_l,t_m)<t_k$, then it is always optimal for $c_A$ (resp. $c_B$) to delay and use label $t_l$ since label $t_m$ will remain available for color $c_B$ (resp. $c_A$) and there will be no delay on the reaching-time of $c_B$ (resp. $c_A$). Thus, if every edge on a path has two labels, we argue that there exists an optimal solution that includes reaching-time from source $A$ (resp. $B$) to source $B$ (resp. $A$), called $r_1$ (resp. $r_2$), where $1\leq r_1,r_2 \leq (t_{max}+n-1)$, since after time step $t_{max}$ every label is available to be delayed and the maximum distance between the two sources is $n-1$. If there is a path $P_x$ that contains an edge with only one label, then either (i) colors $c_A,c_B$ will traverse that edge together, or (ii) only one of the colors $c_A,c_B$ will use path $P_x$. Either way, there exists an optimal solution, where the reaching-time from source $A$ (resp. $B$) to source $B$ (resp. $A$) will be $1\leq r_1,r_2 \leq (t_{max}+n-1)$ and the algorithm will go through every such reaching-time. 

For each valid reaching-time from source to source, the algorithm then tries to guess the reaching-time for every other vertex on the rest of the paths $P_w$. W.l.o.g. let us assume that in the optimal solution, the reaching-time from source $A$ (resp. $B$) to source $B$ (resp. $A$) is $r_1$ (resp. $r_2$) through path $P_x$ (resp. $P_y$) and $r_1\geq r_2$. For each vertex $w_i$ on each path $P_w=(w_1,w_2,\ldots w_{n-2})$ that source $A$ (resp. $B$) has to reach, there are two possible ways this can happen: through path $(A,w_1,w_2,\ldots,w_i)$ or (ii) through path $(A,P_x,B,w_{n-2},w_{n-3},\ldots w_{i})$. Consider now the last vertex $w_j$ to be reached by both sources $A,B$ on path $P_w$ in an optimal solution. There are $4$ distinct cases showing how $w_j$ was reached: 
\begin{enumerate}
    \item $w_j$ was reached by source $A$ via path $(A,w_1,w_2,\ldots,w_j)$ and by source $B$ via path $(B,P_y$ $,A,w_1,w_2,\ldots,w_j)$;
    \item $w_j$ was reached by source $A$ via path $(A,w_1,w_2,\ldots,w_j)$ and by source $B$ via path $(B,w_{n-2}$ $,w_{n-3},\ldots, w_{i})$;
    \item $w_j$ was reached by source $A$ via path $(A,P_x,B,w_{n-2},w_{n-3},\ldots, w_{j})$ and by source $B$ via path \\ $(B,P_y,A,w_1,w_2,\ldots,w_i)$;
    \item $w_j$ was reached by source $A$ via path $(A,P_x,B,w_{n-2},w_{n-3},\ldots, w_{j})$ and by source $B$ via path \\ $(B,w_{n-2},w_{n-3},\ldots w_{j})$.
\end{enumerate}
 
Consider Case 1 and w.l.o.g. assume that source $A$ reached vertex $w_j$ at time step $t_x$ and source $B$ reached $w_j$ at time step $t_y>t_x$. Note that if we decided to delay the reaching-time of $A$ up to $t_x$, by having color $c_A$ wait at source $A$ for color $c_B$ and then continue together, the reaching-time of every vertex $(w_1,w_2,\ldots,w_g)$ will not increase the value of the overall solution. On the other hand, color $c_A$ may reach vertices $(w_{j+1},\ldots,w_{n-2})$ at a later time step and decrease the value of the overall solution. But if that is true, color $c_A$ should start traversing path $P_w$ without waiting for color $c_B$ at any point since waiting would increase the value of the overall solution as stated before. Therefore there exists an optimal solution where color $c_A$ waits for color $c_B$ at vertex $A$ or there exists an optimal solution where $c_A$ never waits for color $c_B$.
Cases 2,3,4 can be argued in a similar fashion where in every case there exists an optimal solution where either color $c_A$ (resp. $c_B$) waits for color $c_B$ (resp. $c_A$) at source $A$ (resp. $B$), or color $c_A$ (resp. $c_B$) never waits for color $c_B$ (resp. $c_A$). Algorithm \ref{alg:two-sources-distinct-paths} checks every such case and specifically: in Step (i) both colors $c_A,c_B$ wait for the other color. In Steps (ii) and (iii), no color waits for the other. In Step (iv) color $c_A$ waits for color $c_B$ and in Step (v) color $c_B$ waits for color $c_A$. Thus, the algorithm always finds an optimal solution.

Let us now discuss the complexity of the algorithm. By description, going through every possible reaching-time from source $A$ (resp. $B$) to source $B$ (resp. $A$) requires $O(t_{max}+n-1)^2$ repetitions. For every such reaching-time, finding the correct paths requires $O(n^2)$ repetitions, since there are $n^2$ combinations of paths $P_x,P_y$. For every such combination, the algorithm finds the optimal reaching-time for every path $P_w$ by checking each path $5$ times. Since there are $n$ paths at most, this step requires $O(n)$ repetitions. In total, Algorithm \ref{alg:two-sources-distinct-paths} runs in polynomial time.
\fi
\ifshort
\begin{proof}[Proof sketch]
The algorithm goes through every possible pair of reaching-times \((r_1, r_2)\) that sources \(A\) and \(B\) can have to reach one another. 
    This process requires $O(t_{max}+n-1)^2$ repetitions.
    Note that this process runs through every possible solution of the \reachfast problem since by time step $t_{max}$ every label on the graph is available to be delayed and the static distance between the sources cannot be more than $n-1$. Since every label is available, if every edge has two labels and there are two sources on the graph, there is no reason for a color to delay using an edge, since there is always a second label available for the other color. If there is only one label on an edge of a path $P_x$, there are only two valid choices: either (i) both colors traverse the one label edge together (possibly in the opposite direction), or (ii) only one color uses path $P_x$. Thus the algorithm runs through every possible scenario that might yield a solution to the \reachfast problem. 
    
    In Step $1$, the algorithm checks every possible combination of paths that can achieve the reaching-time imposed by each for-loop. This can be achieved in polynomial time, since every loop checks a combination of two paths, and any graph can have at most $n$ parallel paths between the two sources. 
    
    In Step $2$, for every path combination of $P_x,P_y$ that can achieve the desired reaching-time, the algorithm uses the multiple source tree algorithm as a subroutine to transmit the color to the rest of the edges on the paths. Specifically, for each path \(P_w\) we need to distinguish between several cases depending on whether the sources $A$ and $B$ reach all the vertices of \(P_w\) directly on the path \(P_w\) or, for example, the source \(A\) uses \(P_x\) to reach \(B\) before reaching some vertices of the path \(P_w\). Unfortunately in the latter case we cannot use Theorem~\ref{thm:trees} directly, as the vertex \(B\) becomes available for the "color" of source \(A\) only at time step \(r_1\), but a minor modification makes the algorithm to work.
\fi
\end{proof}

\section{Conclusion}
\label{sec:conclusions}
We view our paper as part of a greater agenda whose goal is to make existing infrastructures more efficient, by making minimal changes to the currently-adopted solutions. 
We introduced and studied the complexity  of the \reachfast problem, where the goal was to minimize the maximum reaching-time of a set of sources towards every vertex of the graph using shifting operations. %

Since we study optimization problems that as we proved are \NP-hard, instead of focusing on restricted classes of instances, one can ask to relax the optimality condition and ask for some approximate solutions. Here we would like to point out that Theorem~\ref{thm:two-sources-hard} implies the problem is \APX-hard even for two sources, however existence of a constant approximation for some constant $c>7/6$ is an interesting open question. In both, \reachfast(k) and \reachfasttotal(k) there are two criteria one might wish to optimize: the maximum reaching-time, i.e., a deadline \(D\), and \(k\). Our reduction from \textsc{Hitting Set} is approximation-preserving for \(k\) and hence a \((1-o(1))\log (|V(G)|)\)-approximation algorithm for value \(k\) is unlikely, unless $\PP = \NP$. Similarly, for fixed \(k\), the reduced instance we obtain either admits a solution with \(D=3\) or does not admit a solution, and polynomial time algorithm that approximates \(D\) for fixed \(k\) is again highly unlikely.

Our work also creates several interesting directions for future research that can be studied under the shifting operations. Instead of minimizing the maximum reaching-time of any source, we could try to minimize the average time a source needs to reach all the vertices of the graph. Another objective is to minimize the maximum, or average, time a vertex is reached by {\em any} of the sources; this would be desirable, for example, in the case where a company has several warehouses over a country. The dual version of the problem is intriguing as well. Here, instead of sources, we have a set of {\em sinks}, i.e., every vertex wants to reach them as fast as possible. This model can capture for example the scenario where the sinks correspond to hospitals. 

At a different dimension, someone could focus on other restricted classes of temporal graphs that could be proven tractable. We believe that the class implicitly introduced in~\cite{teec-paper} is one such class; there the temporal graph is given as a collection of ``few'' temporal paths/walks/trails.
While some of our results should work for some cases of this setting, we expect that novel algorithmic techniques are required in order to tackle them.

\bibliographystyle{named}
\bibliography{references}

%
\end{document}